\let\expandafter\oldproof\csname\string\proof\endcsname
\let\oldendproof\endproof
\renewenvironment{proof}[1][\proofname]{%
	\oldproof[\bf #1]%
}{\oldendproof}
\theoremstyle{plain}
\newtheorem{theorem}{Theorem}
\newtheorem{lemma}{Lemma}[section]
\newtheorem{proposition}[lemma]{Proposition}
\newtheorem{observation}[lemma]{Observation}
\newtheorem{corollary}[theorem]{Corollary}
\newtheorem{conjecture}[lemma]{Conjecture}
\newtheorem{problem}[lemma]{Problem}
\newtheorem{definition}[lemma]{Definition}
\newtheorem{fact}[lemma]{Fact}
\newcommand{\inj}{\mathrm{inj}}
\newcommand{\ind}{\mathrm{ind}}
\definecolor{RED}{rgb}{1,0,0}\definecolor{BLUE}{rgb}{0,0,1} 
\newcommand{\Hdir}{\vec{H}}
\newcommand{\FHdir}{F_{\vec{H}}}
\newcommand{\didist}{\overrightarrow{\text{dist}}}
\title{Counting Subgraphs in Degenerate Graphs}
\author{Suman K. Bera\thanks{University of California, Santa Cruz. Supported by NSF TRIPODS grant CCF-1740850, NSF CCF-1813165, CCF-1909790, CCF-2023495, and
		ARO Award W911NF1910294. Emails: \{sbera,sesh\}@ucsc.edu.}
	\and Lior Gishboliner\footnotemark[2]
	\and Yevgeny Levanzov\thanks{School of Mathematics, Tel Aviv University, Tel Aviv 69978, Israel. Supported in part by ISF Grant 1028/16 and ERC Starting Grant 633509. Emails: \{liorgis1,yevgenyl,asafico\}@mail.tau.ac.il.}
	\and C. Seshadhri\footnotemark[1]
	\and Asaf Shapira\footnotemark[2]}
\begin{document}
	
	\maketitle
	
	\begin{abstract}
		We consider the problem of counting the number of copies of a fixed graph $H$ within an input graph $G$. This is one of the most well-studied algorithmic graph problems, with many theoretical and practical applications. We focus on solving this problem when the input $G$ has {\em bounded degeneracy}. This is a rich family of graphs, containing all graphs without a
		fixed minor (e.g. planar graphs), as well as graphs generated
		by various random processes (e.g. preferential attachment graphs). We say
		that $H$ is {\em easy} if there is a linear-time algorithm for counting the
		number of copies of $H$ in an input $G$ of bounded degeneracy.
		A seminal result of Chiba and Nishizeki from '85 states that every $H$ on at
		most 4 vertices is easy. Bera, Pashanasangi, and Seshadhri recently extended
		this to all $H$ on 5 vertices, and further proved that for every $k > 5$ there is
		a $k$-vertex $H$ which is not easy. They left open the natural problem of characterizing all easy graphs $H$. 
		
		Bressan has recently introduced a framework for counting subgraphs in degenerate graphs, from which one can extract a sufficient condition for a graph $H$ to be  easy. Here we show that this sufficient condition is also necessary, thus fully
		answering the Bera--Pashanasangi--Seshadhri problem. 
		We further resolve two closely related problems; namely characterizing the graphs that are easy with respect to counting induced copies, and with respect to counting homomorphisms. 
		
		
	\end{abstract}
	
	\section{Introduction}\label{sec:intro}
	
	Subgraph counting refers to the algorithmic task of computing the number of copies (i.e., occurences) of a given graph $H$ in an input graph $G$. 
	Due to its fundamental nature, this problem has been studied extensively, both from a theoretical perspective and for practical applications. 
	In practice, subgraph counts are widely used to analyze real-world graphs, such as graphs representing telecommunication networks, biological structures and social interactions. Consequently, subgraph counts feature prominently in studies of biological \cite{HBPC,Przulj,PCJ} and sociological \cite{Burt,HL} networks, as well as in the network science literature in general \cite{BGL,KLYGCT,MSIKCA,SSPC,Tsourakakis,UBK}. 
	For example, \cite{MSIKCA} observed that networks coming from different areas of science (such as biochemistry, neurobiology, ecology, and engineering) have significantly different counts of small subgraphs.
	Such frequently occuring subgraphs are called {\em motifs}, and, quoting \cite{MSIKCA}, ``may uncover the basic building blocks of most networks".  
	Needless to say, some real-world graphs can be very large -- having billions of vertices -- thus making it all the more desirable to have fast subgraph counting algorithms.   
	
	In theoretical computer science, subgraph counting and detection\footnote{The $H$-detection problem is the problem of deciding whether an input graph contains a copy of $H$.} are fundamental and widely-studied problems. Much of the research focused on counting special kinds of graphs, such as cliques \cite{EG,IR,NP,V W}; cycles \cite{AYZ,DVW,IR}; paths and matchings (and other graphs with bounded pathwidth) \cite{BHKK,BKL}; graphs with bounded vertex-cover number \cite{CDM,C_Marx,KLL,VW-W}. Many of the algorithms use fast matrix multiplication \cite{AYZ,EG,IR,KLL,NP}. For example, the best known algorithm for counting $k$-cliques \cite{NP} runs in time 
	$n^{\omega k/3+O(1)}$, where $\omega < 2.373$ is the matrix multiplication constant \cite{V W 2}. On the negative side, $k$-clique counting is the canonical $\#\text{W}[1]$-hard problem, and it is thus unlikely that there exists an algorithm which solves this problem in time $f(k) \cdot n^{o(k)}$ (for any function $f$).   
	We refer the reader to \cite{BPS} for further references on both theoretical and practical aspects of subgraph counting. 
	
	Subgraph counts also play a fundamental role in extremal graph theory, where subgraph densities are the basic notion used for studying sequences of dense graphs \cite{Lovasz}. In particular, knowing approximate subgraph counts of small graphs inside a given graph $G$ allows one to decide whether $G$ is quasirandom \cite{CGW} or, more generally, whether $G$ consists of a bounded number of quasirandom pieces with prescribed edge densities \cite{LS}.
	
	Given the importance of the subgraph counting problem on the one hand, and its hardness in general graphs on the other, it is natural to consider special classes of graphs which admit faster counting algorithms, while also being rich enough to include many of the real-world graphs mentioned above. One prime example of such a family of graph classes is classes having bounded degeneracy. Recall that a graph $G$ is {\em $\kappa$-degenerate}\footnote{We note that degeneracy is closely related to another well-studied graph parameter, namely arboricity, which is the minimum number of forests into which the edge-set of a graph can be partitioned. It is well-known that the arboricity of a $\kappa$-degenerate graph is between $(\kappa+1)/2$ and $\kappa$.} if there is an ordering $v_1,\dots,v_n$ of the vertices of $G$ such that $v_i$ has at most $\kappa$ neighbours in $\{v_{i+1},\dots,v_n\}$ (for each $1 \leq i \leq n$). We say that a class of graphs has {\em bounded degeneracy} if there is an integer $\kappa$ such that all graphs in the class are $\kappa$-degenerate. 
	With a slight abuse of terminology, we will refer to graphs belonging to such classes as having bounded-degeneracy or being $O(1)$-degenerate.
	
	There are many examples of well-studied graph classes having bounded degeneracy. These include all minor-closed classes (including planar graphs and, more generally, graphs embeddable into a given surface), preferential attachment graphs \cite{BA}, and bounded expansion graphs \cite{NDeM}. 
	
	The first result on subgraph counting in bounded-degeneracy graphs is probably the classical result of Chiba and Nishizeki \cite{CN}, who showed that in $\kappa$-degenerate graphs, one can count $r$-cliques in time $O(n \kappa^{r-2})$ (for each $r \geq 3$), and $4$-cycles in time $O(n \kappa)$. 
	Bera, Pashanasangi and Seshadhri \cite{BPS} recently extended this result, by showing that the $H$-counting problem in bounded-degeneracy graphs can be solved in time $O(n)$ for every graph $H$ on at most $5$ vertices (here, the implicit constant in the big-$O$ notation depends on the degeneracy $\kappa$). They further showed that under a certain widely-believed hardness assumption in fine-grained complexity, the problem of counting $6$-cycles cannot be solved in linear time in bounded-degeneracy graphs. (They also proved a similar result for all longer cycles, with the exception of the cycle of length $8$.) 
	The hardness assumption used asserts that detecting a triangle in a (general) graph with $m$ edges requires significantly more than $O(m)$ time. 
	This assumption will also serve as the foundation for our complexity-theoretic hardness results, and we state it here as follows. We refer the reader to  \cite{A-VW}, where Conjecture \ref{conj:triangle_detection} was first formulated, for a detailed overview of the conjecture and its relations to many other computational problems. 
	
	\begin{conjecture}[\textsc{Triangle Detection Conjecture} \cite{A-VW}]\label{conj:triangle_detection}
		There exists $\gamma > 0$ such that in the word RAM model of $O(\log n)$ bits, any algorithm to decide whether an input graph with $n$ vertices and $m$ edges is triangle-free requires $\Omega(m^{1+\gamma})$ time in expectation. 
	\end{conjecture}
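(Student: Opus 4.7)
The statement is the Triangle Detection Conjecture of Abboud--Vassilevska Williams, one of the cornerstone open hypotheses of fine-grained complexity. Consequently I cannot offer a genuine proof plan; the best I can do is outline the directions one would attempt and explain why each currently hits a wall. The paper itself treats the conjecture as an axiom, and all of its algorithmic-hardness consequences for subgraph counting are derived conditionally on it rather than from scratch.

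The most natural attack would be a fine-grained reduction from a problem already believed to require super-linear time. One would encode an instance of, say, online matrix--vector multiplication, $3$SUM, or an appropriate form of SETH into a graph $G$ with $m$ edges so that triangles in $G$ correspond to ``witnesses'' of the source instance. The plan would be: (i) design a gadget converting a single source instance into a graph whose triangle count determines the answer; (ii) control the blow-up so that $m = n^{\alpha}$ with $\alpha$ small enough that an $O(m^{1+o(1)})$-time triangle algorithm would refute the source hypothesis; (iii) verify the reduction itself runs in linear time in $m$, and adapt it to expected running time as the conjecture requires.

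The main obstacle, and the reason this is open, is that all known relations between triangle detection and algebraically ``harder'' problems like Boolean matrix multiplication run in the opposite direction: the reduction triangle $\leq$ BMM yields the $O(m^{2\omega/(\omega+1)}) \approx O(m^{1.41})$ algorithm of Alon--Yuster--Zwick, but no non-trivial reduction is known from BMM (or from any standard fine-grained hypothesis) to triangle detection. In particular it is not known whether Conjecture~\ref{conj:triangle_detection} follows from SETH. The parallel unconditional route also fails: there is no word-RAM lower bound of the form $\Omega(m^{1+\gamma})$ for any natural problem, since current information-theoretic and cell-probe techniques do not reach this regime. Any honest plan therefore reduces to ``find a new idea in fine-grained lower bounds'', which is precisely why the conjecture is taken as a hypothesis here rather than proved, and why all the hardness theorems later in the paper will inevitably be stated as conditional on it.
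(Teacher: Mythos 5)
You are correct: this statement is a conjecture, not a theorem, and the paper offers no proof of it — it is cited from Abboud and Vassilevska Williams and used throughout as a hardness hypothesis, exactly as you describe. Your discussion of why no proof is currently available (the known reductions run in the wrong direction, and unconditional super-linear word-RAM lower bounds are out of reach) is an accurate account of the state of the art, so there is nothing to compare against and nothing to fix.
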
  
	
	It is believed that the constant $\gamma$ in Conjecture \ref{conj:triangle_detection} could be as large as $1/3$. The reason for this is that the best known algorithm for triangle detection \cite{AYZ} runs in time 
	$O\big(\min \big( n^{\omega}, m^{2\omega/(\omega+1)} \big)\big)$, where $\omega$ is the matrix multiplication constant. If $\omega = 2$ (which would be optimal), then the running time of this algorithm is 
	$O\big(\min \big( n^2, m^{4/3} \big)\big)$.
	
	Having established both positive and negative results,
	Bera, Pashanasangi and Seshadhri \cite{BPS} asked whether one can characterize all graphs $H$ for which the $H$-counting problem can be solved in linear time in bounded-degeneracy graphs. 
	Before proceeding to our resolution of this question, let us recall some definitions
	and introduce some notation. 
	\begin{definition}\label{def:hom-1-1-hom-iso}
		Let $G,H$ be graphs. A map $\varphi: V(H) \rightarrow V(G)$ is
		\begin{enumerate}
			\item a {\em homomorphism} if $\{\varphi(u),\varphi(v)\} \in E(G)$ for every $\{u,v\} \in E(H)$;
			
			\item an {\em injective homomorphism} if it is a homomorphism and a one-to-one function;
			
			\item an {\em isomorphism} if it is an injective homomorphism and $\{\varphi(u),\varphi(v)\} \notin E(G)$ if $\{u,v\} \notin E(H)$.
		\end{enumerate} 
	\end{definition}	
	\noindent Let $G,H$ be graphs. Following \cite{Lovasz}, we denote by $\hom(H,G)$ the number of homomorphisms
	from $H$ to $G$, by $\inj(H,G)$ the number of injective homomorphisms (i.e., embeddings) from $H$ to $G$, and by $\ind(H,G)$ the number of isomorphisms from $H$ to (induced subgraphs of) $G$.  
	For a graph $H$, we denote by $\textsc{hom-cnt}_{H}$ the problem of computing $\hom(H,G)$ for a given input graph $G$. The problems $\textsc{inj-cnt}_{H}$ and $\textsc{ind-cnt}_{H}$ are defined analogously. 
	In what follows, we say that $\textsc{hom-cnt}_{H}/\textsc{inj-cnt}_{H}/\textsc{ind-cnt}_{H}$ is {\em easy} if it can be solved in time $f(\kappa,H) \cdot \tilde{O}(n)$ in $n$-vertex graphs of degeneracy $\kappa$ (for some function $f$); and otherwise we say that it is {\em hard}. We will usually avoid mentioning the function $f$, and just speak of running time $\tilde{O}(n)$, namely (nearly) linear time, with the implicit constant in the big-$\tilde{O}$ notation allowed to depend on $\kappa$ and $H$. 
	The following is the main open problem raised in \nolinebreak \cite{BPS}.
	\begin{problem}[\cite{BPS}]\label{prob:BPS}
		Characterize the graphs $H$ for which $\textsc{inj-cnt}_{H}$ is easy. 
	\end{problem}
	
	

	In this paper, we completely resolve Problem \ref{prob:BPS} by giving a very clean characterization of the graphs $H$ for which $\textsc{inj-cnt}_{H}$ is easy.
	We will also solve the related problems of characterizing the graphs $H$ for which $\textsc{hom-cnt}_{H}$ is easy and the graphs $H$ for which $\textsc{ind-cnt}_{H}$ is easy. It will turn out to be more convenient to first deal with the problem of obtaining a characterization for $\textsc{hom-cnt}_{H}$.
	This characterization, discussed in Section \ref{subsec:hom}, constitutes the main result of this paper. In Section \ref{subsec:inj} we describe how the solution
	of Problem \ref{prob:BPS} regarding $\textsc{inj-cnt}_{H}$ can be derived from our result regarding $\textsc{hom-cnt}_{H}$. Then, in Section \ref{subsec:ind}, we describe how a characterization for $\textsc{ind-cnt}_{H}$ can be derived from the one for $\textsc{inj-cnt}_{H}$. This approach -- of relating homomorphisms to copies and copies to induced copies -- was pioneered in \cite{CDM} and \cite{CM}, and is based on the framework developed in \cite{Lovasz}. 
	Finally, in Section \ref{subsec:general_graphs} we briefly consider subgraph-counting in {\em general} (i.e., not necessarily degenerate) graphs. By applying the methods used for proving Theorem \ref{thm:main}, we
	show that for a graph $H$, $\hom(H,G)$ can be computed in time $\tilde{O}(|V(G)| + |E(G)|)$ in general graphs if and only if $H$ is a forest (again, the ``only-if" direction assumes Conjecture \ref{conj:triangle_detection}).

	\subsection{$\alpha$-acyclic hypergraphs and Bressan's algorithm}
	Bressan \cite{Bressan} provided a dynamic programming algorithm for computing $\hom(H,G)$ in $O(1)$-degenerate graphs $G$.
	His main result is that computing $\hom(H,G)$ in $n$-vertex $O(1)$-degenerate graphs can be done in time\footnote{If one forgoes the requirement that the algorithm be deterministic, instead allowing 
	Las Vegas randomized algorithms (as is done for example in \cite{BPS}), then one can avoid the implicit polylogarithmic factor, obtaining an algorithm which runs in expected time $O(n^{\tau_1(H)})$. Indeed, the polylogarithmic factor arises from the need to search and update a dictionary with $O(n^{\tau_1(H)})$ entries. This can be done in time $O(1)$ by using perfect hashing \cite{FKS}, at the cost of needing expected time $O(n^{\tau_1(H)})$ to generate the hash table. Once generated, the algorithm proceeds deterministically.}
	$\tilde{O}(n^{\tau_1(H)})$, where $\tau_1(H)$ is a certain ``width parameter" called the {\em DAG treewidth} of $H$ (we use the notation $\tau_1$ to be consistent with \cite{Bressan}). 
	In our language, the special case $\tau_1(H) = 1$ of Bressan's result states that $\textsc{hom-cnt}_H$ is easy if $\tau_1(H) = 1$.
	As we will mainly focus on the case $\tau_1(H) = 1$, we shall not define $\tau_1(H)$ here, but shall only define what it means for a graph to satisfy $\tau_1(H) = 1$. To do so, we first need to recall the well-known notion 
	of hypergraph $\alpha$-acyclicity, which is closely related to the \nolinebreak parameter \nolinebreak $\tau_1$. 
	
	\begin{definition}[$\alpha$-acyclic hypergraph]\label{def:alpha-acyclic-hyper}
		A hypergraph $F$ is called \textup{$\alpha$-acyclic} if there exists a tree $T$ whose vertices are the hyperedges of $F$, such that the following condition is satisfied: for all $e_1,e_2,e \in E(F) = V(T)$, if $e$ is on the unique path in $T$ between $e_1$ and $e_2$, then $e_1 \cap e_2 \subseteq e$. 
	\end{definition}
	Hypergraph $\alpha$-acyclicity was introduced by Beeri, Fagin, Maier, Mendelzon, Ullman and Yannakakis \cite{BFMMUY} in the early 1980s in connection with relational database schemes, and has subsequently been widely studied. For further information, we refer the reader to 
	\cite{BFMY,B-B}.
	Note that in the special case that $F$ is a graph, $\alpha$-acyclicity is equivalent to being a forest.

	Next, one associates to each directed acyclic graph\footnote{We note that directed acyclic graphs arise naturally in the context of counting subgraphs in degenerate graphs. Indeed, many of the known counting algorithms \cite{CN,BPS,Bressan} begin by orienting the edges of the given $\kappa$-degenerate input graph $G$ according to some {\em degeneracy ordering}, thus obtaining an acyclic directed graph $\vec{G}$ in which all out-degrees are at most $\kappa$. Then the task becomes to compute $\hom(\vec{H},\vec{G})$ for every acyclic orientation $\vec{H}$ of $H$. Summing these directed homomorphism counts then gives \nolinebreak $\hom(H,G)$.} (or DAG, for short) $\vec{H}$ a hypergraph which captures the {\em reachability structure} of $\vec{H}$. 
	For a vertex $u$ in a directed graph $\vec{H}$, we denote by $R(u)$ the set of vertices that are \textit{reachable} from $u$ (namely, the set of all $v \in V(\vec{H})$ such that there is a directed path from $u$ to $v$). Note that $u \in R(u)$.
	 Recall that any DAG has at least one source (i.e. vertex of in-degree $0$). Given a DAG $\vec{H}$ with source vertices $u_1,\dots,u_r$, define an associated hypergraph $\FHdir$, as follows: the vertices of $\FHdir$ are the vertices of $\Hdir$, and the hyperedges of $\FHdir$ are the sets $R(u_i)$ for $1 \leq i \leq r$. Using this definition (and with a slight abuse of terminology), we can now define what it means for a DAG or an undirected graph to be $\alpha$-acyclic. 
	
	\begin{definition}[$\alpha$-acyclic graphs and digraphs]\label{def:alpha-acyclic-dir}
		 We say that a DAG $\Hdir$ is $\alpha$-acyclic if the hypergraph $\FHdir$ is $\alpha$-acyclic. We say that an undirected graph $H$ is $\alpha$-acyclic if every acyclic orientation $\Hdir$ of $H$ \nolinebreak is \nolinebreak $\alpha$-acyclic.  
	\end{definition}
	
	For an undirected graph $H$, having $\tau_1(H) = 1$ is precisely equivalent to being $\alpha$-acyclic (this is merely a restatement of the definition of $\tau_1$ for the case $\tau_1(H) = 1$ in a different language). As mentioned above, a special case of the main result of \cite{Bressan} implies that $\hom(H,G)$ can be computed in time $\tilde{O}(n)$ whenever $H$ is $\alpha$-acyclic. Our main result, described in the following section, states that this sufficient condition is also necessary, thus giving a complete characterization of the graphs $H$ for which $\textsc{hom-cnt}_H$ \nolinebreak is \nolinebreak easy.

	\subsection{Main result: counting homomorphisms in linear time}\label{subsec:hom}

Our main result in this paper is as follows. 


\begin{theorem}[Main result]\label{thm:main}
	Assuming Conjecture \ref{conj:triangle_detection}, $\textsc{hom-cnt}_{H}$ is hard whenever $H$ is not $\alpha$-acyclic. 
\end{theorem}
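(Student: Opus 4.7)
The plan is to reduce triangle detection to $\textsc{hom-cnt}_H$: given a (tripartite) graph $G_T$ on $n$ vertices and $m$ edges, I would construct in time $O(m)$ a bounded-degeneracy graph $G^{*}$ such that $\hom(H, G^{*})$, together with a few easily computable auxiliary counts, determines the triangle count of $G_T$. An $O(n)$-time algorithm for $\textsc{hom-cnt}_H$ would then yield an $O(m)$-time triangle-detection algorithm, contradicting Conjecture~\ref{conj:triangle_detection}.

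First, fix an acyclic orientation $\vec{H}$ of $H$ witnessing non-$\alpha$-acyclicity, so that the hypergraph $F_{\vec{H}}$ with hyperedges $R(u_1), \dots, R(u_r)$ (one per source) is not $\alpha$-acyclic. Apply the GYO reduction (iteratively remove vertices that lie in a single hyperedge, and remove hyperedges contained in another): since $F_{\vec{H}}$ is not $\alpha$-acyclic, this leaves a nontrivial residue $F^{*}$ in which every vertex lies in at least two hyperedges and no hyperedge is dominated. Within $F^{*}$, identify a Berge-cycle-like obstruction --- in the simplest case, three sources of $\vec{H}$ whose reachability sets pairwise intersect but admit no valid join tree.

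Using this obstruction as a blueprint, construct $G^{*}$ from $G_T$ in a manner generalizing the edge-subdivision reduction used by Bera--Pashanasangi--Seshadhri for the $6$-cycle. The construction should have the property that each triangle of $G_T$ extends to $c_H > 0$ distinct $\vec{H}$-homomorphisms in $G^{*}$, with three distinguished sources of $\vec{H}$ mapped to the triangle's three vertices; peripheral vertices of $\vec{H}$ removed by GYO are attached via auxiliary substructures whose contributions to $\hom(H, G^{*})$ can be enumerated independently. Bounded degeneracy of $G^{*}$ is arranged by placing the copies of $V(G_T)$ first in a degeneracy ordering, with all gadget edges oriented outward. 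Because $\textsc{hom-cnt}_H$ is stated for undirected $H$, the identity $\hom(H, G^{*}) = \sum_{\vec{H}'} \hom(\vec{H}', \vec{G}^{*})$ (summing over acyclic orientations $\vec{H}'$) would then be used to isolate the contribution of $\vec{H}$: a color/label gadget can be added so that only the target orientation contributes to the triangle-encoding term, while the remaining orientations reduce to $\alpha$-acyclic subproblems handled by Bressan's algorithm.

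The main obstacle I foresee is designing the triangle-forcing gadget uniformly for all non-$\alpha$-acyclic $H$: the residue $F^{*}$ can be arbitrarily intricate, and the gadget must accommodate whichever Berge cycle survives the GYO reduction while preserving $O(1)$-degeneracy of $G^{*}$ and avoiding spurious $H$-homomorphisms that would contaminate the triangle count. A secondary technical difficulty is cleanly separating the contribution of the bad orientation from those of the other orientations in the undirected sum; I expect this to require a careful reduction in the spirit of \cite{Lovasz} relating $\hom$, $\inj$, and $\ind$ counts, together with an accounting tied to the reachability hypergraph.
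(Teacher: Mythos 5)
Your plan shares only the outermost shell with the paper's proof (reduce from triangle detection, use bounded-degeneracy subdivision-style gadgets), but there are two genuine gaps at exactly the points you flag as obstacles, and the paper resolves both with ideas your sketch does not contain.

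First, you propose to work directly with the reachability hypergraph $F_{\vec{H}}$, run a GYO reduction, and build a gadget tailored to whatever Berge-cycle-like residue survives. You correctly observe that this residue ``can be arbitrarily intricate,'' and you have no uniform gadget for it. The paper does not attack the hypergraph obstruction head-on. Instead it proves the structural Theorem~\ref{thm:alpha-acyclic-no-cycles}: $H$ is $\alpha$-acyclic \emph{if and only if} $H$ has no induced $C_k$ with $k \geq 6$. This collapses ``not $\alpha$-acyclic'' to the concrete statement ``$H$ contains an induced long cycle,'' so that only $H = C_k$ ever needs a hand-built reduction (Lemma~\ref{thm:hardness_cycles}). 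Nothing in your proposal moves from the hypergraph obstruction back to a clean induced subgraph of $H$; without such a step the ``arbitrarily intricate'' problem is real, not a technicality.

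Second, your mechanism for isolating the contribution of the bad orientation $\vec{H}$ inside the undirected sum $\hom(H,G^*) = \sum_{\vec{H}'} \hom(\vec{H}',\vec{G}^*)$ is a ``color/label gadget.'' This does not work: $G^*$ must remain an unlabeled, undirected graph for $\textsc{hom-cnt}_H$ to be the problem being solved, and there is no way to add colors to $G^*$ that makes the undirected homomorphism count decompose orientation by orientation. The paper sidesteps this entirely. Its reduction from triangle detection works at the level of undirected quotient graphs of $C_k$: using the identity $\hom(C_k,G) = \sum_P \inj(C_k/P,G)$ and a host graph $G$ built by subdividing edges, it arranges girth and cycle-structure constraints so that $\inj(C_k/P,G)=0$ for all but a controlled family of partitions $P$ (cycles and forests), then applies M\"obius inversion over that restricted sub-poset. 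The forest terms are computable in linear time by Bressan's algorithm, leaving $\inj(C_k,G)$, which encodes the triangle count. Finally, to pass from $H$ (which merely \emph{contains} $C_k$) to $C_k$ itself, the paper needs Lemma~\ref{thm:hom_cnt_hereditary}, the hereditary lemma, whose proof uses the tensor-product/linear-independence machinery of Erd\H{o}s--Lov\'asz--Spencer via Lemma~\ref{thm:hom_linear_combination}; your proposal has no analogue of this step, and without it you cannot transfer hardness from an induced substructure up to $H$.

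So the proposal is not a correct proof: the central gadget (orientation-isolating coloring) does not exist as described, and the two ingredients the paper actually uses --- the induced-long-cycle characterization of non-$\alpha$-acyclicity and the hereditary/tensor-product reduction --- are absent.
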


As mentioned above, Theorem \ref{thm:main} shows that the sufficient condition (for $\textsc{hom-cnt}_{H}$ being easy) supplied by Bressan's algorithm \cite{Bressan} is in fact necessary. Thus we obtain the following characterization:

\begin{corollary}\label{cor:main}
	$\textsc{hom-cnt}_{H}$ is easy if and only if $H$ is $\alpha$-acyclic. The ``only if" part is under Conjecture \ref{conj:triangle_detection}.
\end{corollary} 


Given Theorem \ref{thm:main}, it is natural to ask if there is a cleaner description of the $\alpha$-acyclic graphs.
Actually, another reason for seeking such a clean description is that in order to prove Theorem \ref{thm:main}, it would be desirable to know that graphs that are not $\alpha$-acyclic have certain easy-to-describe obstructions.
Luckily (and somewhat surprisingly), we have the following concise equivalent description of the $\alpha$-acyclic graphs. Throughout the paper, $C_k$ denotes the cycle of length $k$. For graphs $H,H_0$, we say that $H$ is {\em induced $H_0$-free} if $H$ contains no induced copy of $H_0$. 

\begin{theorem}\label{thm:alpha-acyclic-no-cycles}
	An undirected graph $H$ is $\alpha$-acyclic if and only if $H$ is induced $C_k$-free for every $k \geq 6$.
\end{theorem}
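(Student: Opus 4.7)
The plan is to leverage the classical Beeri--Fagin--Maier--Mendelzon--Ullman--Yannakakis (BFMY) characterization of $\alpha$-acyclic hypergraphs as those that are conformal and whose 2-section is chordal, which reduces the problem to controlling, for every acyclic orientation $\vec{H}$ of $H$, (a) the chordality of the 2-section of $F_{\vec{H}}$ and (b) whether every clique in that 2-section is contained in some hyperedge, both in terms of induced cycles of $H$.

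For the ``only if'' direction, assume $H$ contains an induced cycle $C = v_1 v_2\cdots v_k$ with $k\ge 6$ and construct an orientation of $H$ whose reachability hypergraph is not $\alpha$-acyclic. Use the topological order that places the odd-indexed cycle vertices first, then the even-indexed cycle vertices, then $V(H)\setminus V(C)$ last, and orient every edge from the earlier endpoint to the later one. Because $C$ is induced, each odd-indexed $v_{2i+1}$ (except $v_k$ when $k$ is odd) becomes a source of $\vec{H}$, and a direct inspection gives $R(v_{2i+1})\cap V(C) = \{v_{2i}, v_{2i+1}, v_{2i+2}\}$, while $R(v_1)$ additionally contains $v_k$ (and $v_{k-1}$ when $k$ is odd, via $v_1\to v_k\to v_{k-1}$); moreover, since $V(H)\setminus V(C)$ is topologically last, no source outside $V(C)$ reaches any cycle vertex. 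Consequently the even-indexed cycle vertices $\{v_2, v_4, \ldots, v_{2\lfloor k/2\rfloor}\}$ induce a cycle of length $\lfloor k/2\rfloor$ in the 2-section of $F_{\vec{H}}$. For $k\ge 8$ this violates chordality; for $k\in\{6,7\}$ the triangle $\{v_2, v_4, v_6\}$ lies in no hyperedge and thus violates conformality. Either way $F_{\vec{H}}$ is not $\alpha$-acyclic.

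For the ``if'' direction, I would argue the contrapositive: assuming $F_{\vec{H}}$ is not $\alpha$-acyclic for some orientation $\vec{H}$, produce an induced $C_k$ in $H$ with $k\ge 6$. Pick a minimal obstruction from BFMY --- either a phantom clique $\{w_1,\ldots,w_m\}$ ($m\ge 3$) or an induced cycle $w_1\cdots w_l$ ($l\ge 4$) in the 2-section. For each covered pair $\{w_i, w_j\}$ fix a source $s_{ij}$ with $w_i, w_j \in R(s_{ij})$ and shortest directed paths $\pi_{ij}^i : s_{ij}\to w_i$ and $\pi_{ij}^j : s_{ij}\to w_j$ in $\vec{H}$; concatenating these in the obstruction's cyclic order produces a closed walk $W$ in $H$ from which I aim to extract an induced cycle of length $\ge 6$. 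The key structural observations I would use are: the $s_{ij}$'s must be pairwise distinct sources, each distinct from every $w_k$, since any coincidence would place one source into the intersection of source-ancestor sets that the obstruction forbids, so every $\pi_{ij}^i$ has length $\ge 1$; no $H$-edge can exist between two sources (they would acquire in-edges), between two non-adjacent $w$'s (upon orientation, such an edge forces a common source-ancestor of a forbidden triple), or between $w_i$ and $s_{jk}$ with $i\notin\{j,k\}$ (such an edge, oriented outward from the source $s_{jk}$, forces $s_{jk}\in A(w_i)$). The walk thus has length at least $2m\ge 6$ (or $2l\ge 8$), and once the remaining ``interior'' chords are also ruled out, the result is an induced cycle of the desired length.

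The main obstacle is precisely this last step --- ruling out chords whose endpoints are interior vertices of the chosen directed paths. Such a chord, once oriented in $\vec{H}$, yields a directed shortcut that I would aim to leverage either into a strictly shorter total $|\pi_{ij}^i|+|\pi_{ij}^j|$ (contradicting minimality) or into a rogue reachability $w_k\in R(s_{ij})$ with $k\notin\{i,j\}$ (contradicting the certificate). Carrying out this dichotomy cleanly requires a careful case split on the location of the chord's endpoints within the DAG's layering and on the orientation of the chord, and I expect this to be the most delicate technical step of the proof.
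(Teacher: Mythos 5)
Your ``only if'' direction is correct and takes a legitimately different route from the paper: you invoke the conformality-plus-chordal-2-section form of the BFMY characterization, whereas the paper invokes the obstruction characterization (Theorem \ref{thm:HG-alpha-acyclic-S}) directly. Your topological order (odds, then evens, then the rest) yields essentially the same reachability hypergraph as the paper's alternating orientation, and your case split ($k\ge 8$ gives an induced $C_{\lfloor k/2\rfloor}$ among the even cycle vertices in the 2-section, contradicting chordality; $k\in\{6,7\}$ gives a phantom triangle $\{v_2,v_4,v_6\}$, contradicting conformality) checks out. This half is sound.

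The ``if'' direction, however, has a genuine gap. You set up the same scheme the paper uses (pick a BFMY obstruction, fix sources witnessing its edges, take shortest directed paths, concatenate into a closed walk, and argue the walk is a long induced cycle), and you correctly rule out chords between the \emph{endpoint} vertices ($w_i$'s and $s_{jk}$'s). But you then declare ruling out chords with \emph{interior} endpoints to be ``the most delicate technical step'' and leave it undone. This is not a minor finishing touch: in the paper it is the entire content of Lemma \ref{lem:x_i-in-two-consecutive-T_i-has-cycle}, which occupies several pages of delicate case analysis. More seriously, your stated mechanism for closing it --- ``a strictly shorter total $|\pi_{ij}^i|+|\pi_{ij}^j|$ (contradicting minimality)'' --- minimizes only over shortest paths \emph{with fixed endpoints}. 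That is too weak: a chord $(z_1,z_2)$ with $z_1$ interior to $\pi_{ij}^i$ and $z_2$ interior to $\pi_{ij}^j$ need not shorten any $s_{ij}$-to-$w$ distance. The paper escapes this by making the tuple $(u_0,\dots,u_{k-1},x_0,\dots,x_{k-1})$ itself a free variable: Lemma \ref{lem:x_i-in-two-consecutive-T_i-has-cycle} does \emph{not} require the $u_i$'s to be sources of $\vec H$, only to satisfy an abstract ``good tuple'' reachability condition, and the minimality is taken over all good tuples jointly. This lets the paper replace a source $u_i$ by a non-source interior vertex $z_1$ (or replace $y_i$ by an interior $z_2$, or both at once) and strictly decrease the distance sum. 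Because your framework keeps the $s_{ij}$'s pinned to actual sources, this replacement move is unavailable to you, and without it I don't see how you get a contradiction from an arbitrary interior-to-interior chord. To repair the argument you would need to abstract away the ``source'' constraint, essentially rediscovering the paper's notion of a good tuple and proving Lemma \ref{lem:x_i-in-two-consecutive-T_i-has-cycle} in its stated generality.
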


By combining Corollary \ref{cor:main} and Theorem \ref{thm:alpha-acyclic-no-cycles}, we immediately obtain the following very clean characterization of the graphs $H$ for which $\textsc{hom-cnt}_{H}$ is easy.
\begin{corollary}\label{cor:charac}
	$\textsc{hom-cnt}_{H}$ is easy if and only if $H$ is induced $C_k$-free for all $k \geq 6$. The ``only if" part is under Conjecture \ref{conj:triangle_detection}.
\end{corollary}

%
%

Let us now discuss the ideas that go into the proofs of Theorems \ref{thm:main} and \ref{thm:alpha-acyclic-no-cycles}. 
%
The proof of Theorem \ref{thm:alpha-acyclic-no-cycles} relies on a useful characterization of $\alpha$-acyclic hypergraphs given in \cite{BFMY} (and stated here as Theorem \ref{thm:HG-alpha-acyclic-S}). This characterization asserts that a hypergraph is $\alpha$-acyclic if and only if it does not contain two certain types of {\em obstructions}. 
These two types of obstructions generalize in different ways the notion of an induced cycle from graphs to hypergraphs.
The main difficulty in the proof of Theorem \ref{thm:alpha-acyclic-no-cycles} lies in translating these ``hypergraph obstructions" into ``digraph obstructions", i.e. recognizing the digraph structures whose reachability hypergraphs correspond to these obstructions. 

We now move on to discuss the proof of Theorem \ref{thm:main}.
With Theorem \ref{thm:alpha-acyclic-no-cycles} at hand, it is natural to first try and show that $\textsc{hom-cnt}_{C_k}$ is hard for all $k \geq 6$. This is indeed accomplished in the following \nolinebreak lemma. 

\begin{lemma}\label{thm:hardness_cycles}
	Assuming Conjecture \ref{conj:triangle_detection}, $\textsc{hom-cnt}_{C_k}$ is hard for every $k \geq 6$.
\end{lemma}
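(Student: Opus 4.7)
The plan is to reduce triangle detection to $\textsc{hom-cnt}_{C_k}$; a linear-time algorithm for the latter on $O(1)$-degenerate graphs would then violate Conjecture \ref{conj:triangle_detection}.

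\textbf{The case $k=6$.} Given an input $G$ for triangle detection with $n$ vertices and $m$ edges, let $G'$ be the $1$-subdivision of $G$: replace every edge $\{u,v\}$ by a $2$-path $u$--$x_{uv}$--$v$. Then $G'$ has $n+m$ vertices and is $2$-degenerate (order the $x_{uv}$ first; each has exactly two neighbours to the right). The adjacency matrix has the block form
\[
A(G') \,=\, \begin{pmatrix} 0 & B \\ B^T & 0 \end{pmatrix},
\]
where $B$ is the $n\times m$ vertex--edge incidence matrix of $G$, so $\hom(C_6,G') = \mathrm{tr}(A(G')^6) = \mathrm{tr}((BB^T)^3)+\mathrm{tr}((B^TB)^3)$. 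Since $BB^T = D+A(G)$ (with $D=\mathrm{diag}(d_v)$) and $B^TB = 2I + A(L(G))$, expanding each cube decomposes the trace into terms depending only on the degree sequence of $G$, together with the terms $\mathrm{tr}(A(G)^3)=6t$ and $6\bigl(t+\sum_v\binom{d_v}{3}\bigr)$ coming from triangles in $L(G)$ (which are either triangles of $G$ or stars $K_{1,3}$ in $G$). Combining, $\hom(C_6,G')=12\,t+F(G)$, where $F(G)$ depends only on $m$ and the degree sequence of $G$ and is computable in $O(n+m)$ time. A linear-time algorithm for $\textsc{hom-cnt}_{C_6}$ on $O(1)$-degenerate inputs would thus recover $t$, and hence detect triangles, in $O(n+m)$ time on arbitrary $G$, contradicting Conjecture \ref{conj:triangle_detection}.

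\textbf{The case $k\geq 6$ in general} uses a tripartite subdivision. Write $k=\ell_1+\ell_2+\ell_3$ with each $\ell_i\geq 2$, which is possible precisely when $k\geq 6$. Take three disjoint copies $V_1,V_2,V_3$ of $V(G)$, and for each edge $\{u,v\}\in E(G)$ and each $i\in\{1,2,3\}$, attach two internally vertex-disjoint paths of length $\ell_i$: one joining $u^{(i)}$ to $v^{(i+1\bmod 3)}$, the other joining $v^{(i)}$ to $u^{(i+1\bmod 3)}$. The resulting graph $G'$ is $2$-degenerate and has $O(km)$ vertices; each triangle of $G$ lifts to exactly $6$ cycles of length $k$ in $G'$. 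To analyze $\hom(C_k,G')$, project each closed walk of length $k$ to its ``skeleton'' in the tripartite quotient $G^{\mathrm{col}}$ (enriched with self-loops at every vertex that record excursions inside subdivision paths). The arithmetic identity $\sum\ell_i=k$ together with $\ell_i\geq 2$ forces the unique three-proper-edge skeleton of lifted length exactly $k$ to be a triangle traversal of $G^{\mathrm{col}}$ with minimum segment lengths $r_j=\ell_j$; this contributes $c_k\cdot t$ for a positive integer $c_k$. Every other skeleton either uses at most two distinct edges of $G$ (with the remaining length filled by back-and-forth traversals along subdivision paths) or consists entirely of self-loop excursions at a single vertex, and in either case its contribution reduces to a polynomial in the degrees $d_v$ and in $m$, whose coefficients depend only on $k$ and $(\ell_1,\ell_2,\ell_3)$. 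Such a polynomial is evaluable in $O(n+m)$ time, so $\hom(C_k,G')=c_k\,t+F_k(G)$ with $F_k(G)$ linear-time computable, and the reduction concludes exactly as for $k=6$.

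\textbf{The main obstacle} is the skeleton case analysis in the general construction: one must enumerate every possible skeleton of a closed walk of length $k$ in $G'$ and show that the contribution of every non-triangle skeleton localizes to a bounded number of edges of $G$, with the ``walks inside subdivision paths'' handled via explicit path-graph transfer matrices depending only on $k,\ell_1,\ell_2,\ell_3$. The constraints $\ell_i\geq 2$ and $\sum\ell_i=k$---exactly what makes the threshold $k\geq 6$ tight---are what prevent any three-proper-edge skeleton other than a triangle from summing to $k$, and what cap the number of distinct edges any skeleton can involve.
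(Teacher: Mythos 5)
Your $k=6$ argument is correct and pleasantly clean: the trace computation $\hom(C_6,G')=\mathrm{tr}((BB^T)^3)+\mathrm{tr}((B^TB)^3)$ genuinely isolates $12t$ plus a degree-sequence polynomial, and this is a different route from the paper, which handles $k\equiv 0\pmod 3$ by replacing each edge with a path of length $\ell=k/3$ and then running a M\"obius inversion over the poset of partitions $P$ of $V(C_k)$ for which $C_k/P$ is either $C_k$ itself or a forest, expressing $\inj(C_k,G')$ as a linear combination of $\hom(C_k,G')$ and $\hom(T,G')$ for trees $T$, the tree terms being computable by Bressan's algorithm. Your linear-algebraic route and the paper's M\"obius-inversion route both work for $k=6$.

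The general argument, however, has a concrete gap at $k=8$, and this is not merely a question of ``lengthy case analysis.'' With $k=8$ and $\ell_i\geq 2$ and $\ell_1+\ell_2+\ell_3=8$, some $\ell_i$ must equal $2$; say $\ell_1=2$. Then $4\ell_1=8=k$, so the closed walk $u^{(1)}\to v^{(2)}\to u'^{(1)}\to v'^{(2)}\to u^{(1)}$, using four full traversals of length-$\ell_1$ paths and no excursions, has length exactly $k$ and exists precisely when $u-v-u'-v'-u$ is a closed walk of length $4$ in $G$. When $u,v,u',v'$ are distinct these walks correspond to $4$-cycles of $G$, so $\hom(C_8,G')$ contains a nonzero multiple of the $4$-cycle count of $G$ on top of $c_8 t + F_8(G)$. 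This directly falsifies your claim that ``$\ell_i\geq 2$ and $\sum\ell_i=k$ \dots cap the number of distinct edges any skeleton can involve'': at $k=8$ a four-proper-edge skeleton (a $C_4$) sums to exactly $k$, and its contribution is not a degree-sequence polynomial. The correct threshold you need is $\ell_i > k/4$ for all $i$ (so that no $4$ full traversals fit in length $k$), and one checks that integers with $\ell_i > k/4$, $\sum\ell_i=k$ exist for every $k\geq 6$ \emph{except} $k=8$; this is exactly why the paper singles out $k\equiv 2\pmod 6$, and $k=8$ in particular, as the hardest case. The paper's fix is to use a second auxiliary graph $G''$ (a different subdivision) and take a difference of two $\hom(C_8,\cdot)$ values so that the $\inj(C_4,F)$-dependent terms cancel; nothing in your write-up hints at this extra idea, and without it the reduction does not close for $k=8$. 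More broadly, the paper sidesteps the skeleton enumeration entirely by working over a restricted poset of partitions and appealing to linear-time homomorphism counting for forests (a nice twist: an algorithmic positive result for trees is used as a subroutine in a hardness reduction). If you want to pursue your skeleton approach, you should (i) replace $\ell_i\geq 2$ by $\ell_i>k/4$, (ii) carry out the excursion/transfer-matrix bookkeeping you flag as the main obstacle, and (iii) supply a separate argument for $k=8$, since no choice of $\ell_i$ avoids the $C_4$ contamination there.
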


The proof of Lemma \ref{thm:hardness_cycles} works by reducing triangle detection (in general graphs) to the problem of counting colorful $C_k$-homomorphisms (in bounded-degeneracy graphs), which is in turn reduced to \nolinebreak $\textsc{hom-cnt}_{C_k}$.  
	
The remaining ingredient in the proof of Theorem \ref{thm:main} is the following lemma, which states that if $\textsc{hom-cnt}_H$ is easy then so is $\textsc{hom-cnt}_{H'}$ for every induced subgraph $H'$ of $H$. 
It is easy to see that the combination of Theorem \ref{thm:alpha-acyclic-no-cycles} and Lemmas \ref{thm:hardness_cycles} and \ref{thm:hom_cnt_hereditary} implies Theorem \ref{thm:main}. 


\begin{lemma}\label{thm:hom_cnt_hereditary}
	If $\textsc{hom-cnt}_H$ is easy, then $\textsc{hom-cnt}_{H'}$ is easy for every induced subgraph $H'$ of $H$. 
\end{lemma}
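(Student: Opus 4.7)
The plan is to reduce the full statement to the single-vertex case---that is, to show \textit{if $\textsc{hom-cnt}_H$ is easy, then $\textsc{hom-cnt}_{H-v}$ is easy for every $v \in V(H)$}---and then to iterate along any chain of vertex deletions from $V(H)$ down to $V(H')$. Each intermediate step consumes only a constant blow-up in the running time, so iterating $|V(H)\setminus V(H')|=O(1)$ times preserves linear-time overall.

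For the single-vertex case, fix $v \in V(H)$ and query the oracle on auxiliary graphs $G + Q$, where $Q$ is a small gadget on $O(1)$ vertices and $G + Q$ denotes the graph join. The degeneracy of $G+Q$ is at most $\deg(G) + |V(Q)| = O(1)$, so each oracle call runs in linear time. Partitioning any homomorphism $\phi \colon H \to G+Q$ according to $A := \phi^{-1}(V(G))$ yields the identity
\[
\hom(H, G+Q) \;=\; \sum_{A \subseteq V(H)} \hom(H[A], G)\cdot \hom(H[V(H)\setminus A], Q),
\]
because every $V(G)$--$V(Q)$ pair is an edge of the join. Grouping $A$'s by the isomorphism class $\tau := [H[A]]$ rewrites this as the linear equation
\[
\hom(H, G+Q) \;=\; \sum_{\tau \in \mathcal{I}(H)} \hom(\tau, G)\cdot N_\tau(Q),
\]
where $N_\tau(Q) := \sum_{A\,:\,H[A]\cong \tau}\hom(H[V(H)\setminus A], Q)$ and $\mathcal{I}(H)$ is the finite set of isomorphism classes of induced subgraphs of $H$. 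Varying $Q$ produces one such equation per $Q$ in the constantly many unknowns $\{\hom(\tau, G)\}_{\tau\in \mathcal{I}(H)}$, one of which is the quantity of interest $\hom(H-v, G)$.

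To invert this system I plan to use two tools. First, Bressan's algorithm (the easy direction of Corollary~\ref{cor:main}) computes $\hom(\tau, G)$ in linear time whenever $\tau$ is $\alpha$-acyclic; these values are substituted and treated as known. Second, Lov\'asz's classical theorem that the functions $Q \mapsto \hom(\sigma, Q)$ are linearly independent across distinct iso-classes $\sigma$ supplies, for a generic enough finite family $Q_1,\ldots,Q_m$ of small graphs, a full-rank reduced coefficient matrix on the remaining (non-$\alpha$-acyclic) unknowns, which can then be inverted to recover $\hom(H-v, G)$.

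The main obstacle is this full-rank claim. Two distinct non-$\alpha$-acyclic iso-classes $\tau_1 \neq \tau_2$ of the same size could a priori yield proportional columns $N_{\tau_1}(\cdot) = c\cdot N_{\tau_2}(\cdot)$, in which case no combination of oracle answers separates them. Such proportionalities do occur in small examples (e.g.\ $N_{K_2}(\cdot) = 2\,N_{\overline{K_2}}(\cdot)$ when $H = P_3$), but the tied types there are $\alpha$-acyclic and hence already eliminated by Bressan's algorithm; showing that this is a general phenomenon---i.e.\ that after the $\alpha$-acyclic columns have been removed, no proportionalities remain among the surviving columns---is the delicate combinatorial step at the heart of the reduction.
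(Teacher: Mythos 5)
Your first step --- the join identity
\begin{equation*}
\hom(H,G+Q) \;=\; \sum_{A\subseteq V(H)}\hom(H[A],G)\cdot\hom(H[V(H)\setminus A],Q)
\end{equation*}
and the regrouping into $\sum_{\tau}\hom(\tau,G)\cdot N_\tau(Q)$ --- coincides with the paper's first step (with $Q=K_h$, $h=|V(H)|$). The gap is in how you plan to invert the linear system, and it is a real one. You propose to vary the gadget $Q$ and argue that the resulting coefficient matrix, after deleting the columns for $\alpha$-acyclic $\tau$, is full rank. But the functions $N_\tau(\cdot)$ are \emph{positive sums} of $\hom(\sigma,\cdot)$ over various complements $\sigma = H[V(H)\setminus A]$, so the Lov\'asz independence theorem does not apply to them directly, and proportionalities among the $N_\tau$ genuinely do occur (your own $P_3$ example). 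You explicitly acknowledge that you cannot show the surviving (non-$\alpha$-acyclic) columns are independent; this is not a small cleanup item, it is the crux, and varying $Q$ alone does not provide enough leverage to resolve it. In fact there is no obvious reason why two non-$\alpha$-acyclic induced subgraphs of $H$ of equal size couldn't have proportional $N_\tau$'s, in which case no family of $Q$'s would separate them.

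The paper avoids this entirely by \emph{not} varying $Q$. It fixes the single gadget $Q=K_h$, for which every coefficient $c_i=\sum_{U:H[U]\cong H_i}\hom(H[V(H)\setminus U],K_h)$ is \emph{strictly positive} (each $H[V(H)\setminus U]$ has at most $h$ vertices, so it maps homomorphically into $K_h$). The inversion is then done by varying $G$ via \emph{tensor products}: one queries $\hom(H,(F_j\times G)+K_h)$ for a fixed family $F_1,\dots,F_k$ of small graphs, and the multiplicativity $\hom(H_i,F_j\times G)=\hom(H_i,F_j)\cdot\hom(H_i,G)$ turns the coefficient matrix into $(c_i\,\hom(H_i,F_j))_{i,j}$, which the Erd\H{o}s--Lov\'asz--Spencer result (Lemma~\ref{lem:hom_matrix_invertible}, packaged as Lemma~\ref{thm:hom_linear_combination}) lets you choose to be invertible. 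Tensoring scales each $\hom(H_i,G)$ by an independently controllable factor; joining with a fixed clique only guarantees the coefficients are nonzero. Both ingredients are needed, and your proposal is missing the first. Once you use Lemma~\ref{thm:hom_linear_combination} instead of ``vary $Q$,'' you also don't need Bressan's algorithm at this point, nor the single-vertex induction: all induced subgraphs of $H$ are recovered in one shot.
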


The proof of Lemma \ref{thm:hom_cnt_hereditary} uses an innovative application of a powerful technique developed recently in several works on homomorphism-counting, see \cite{CDM,CM}. At the heart of this technique is the observation that for (non-isomorphic) graphs $H_1,\dots,H_k$ and non-zero constants $c_1,\dots,c_k$, the problem of computing the linear combination $c_1\hom(H_1,\cdot) + \dots + c_k \hom(H_k,\cdot)$ is as hard as computing $\hom(H_i,\cdot)$ for every $1 \leq i \leq k$. 
The proof of this fact (appearing in \cite{CDM}) uses tensor products of graphs and a result of Erd\H{o}s, Lov\'{a}sz and Spencer \cite{ELS} (stated here as Lemma \ref{lem:hom_matrix_invertible}) regarding linear independence of homomorphism counts.
For completeness, we present this proof (adapted to the setting of input graphs of bounded degeneracy and stated here as Lemma \ref{thm:hom_linear_combination}) in the appendix. The use of linear combinations of homomorphism-counts plays a crucial role in many of the reductions presented in this paper. 

Our proof of Lemma \ref{thm:hom_cnt_hereditary} proceeds by showing that for every graph $G$, one can (efficiently) construct a graph $G'$ such that $\hom(H,G')$ equals to a linear combination of the homomorphism counts of all induced subgraphs of $H$ in $G$. 
Unlike Lemmas \ref{thm:inj_computation} and \ref{thm:ind_computation}, which are similar (both in their statements and their proofs) to results obtained in \cite{CDM}, Lemma \ref{thm:hom_cnt_hereditary} is (to the best of our knowledge) a new application of the homomorphism-linear-combination technique. 

In the next two sections we take advantage of known relations between the subgraph counts $\hom,\inj,\ind$ in order to derive analogues of Corollary \ref{cor:charac} for the problems $\textsc{inj-cnt}$ and $\textsc{ind-cnt}$. A similar approach was taken in \cite{CDM}.

\subsection{From homomorphisms to copies}\label{subsec:inj}
	In this section we obtain a characterization of graphs $H$ for which $\textsc{inj-cnt}_{H}$ is easy,
	thus resolving Problem \ref{prob:BPS}.
	To state this characterization, we first need to introduce the notion of a {\em quotient graph}. For a graph $H$ and a partition $P = \{U_1,\dots,U_k\}$ of $V(H)$, the {\em quotient graph} $H/P$ is the graph on $P$ in which, for every $1 \leq i,j \leq k$, there is an edge\footnote{Note that if some $U_i$ is not an independent set in $H$, then the vertex $U_i$ has a loop in $H/P$. Such partitions $P$ can be safely ignored in all of our arguments, since our input graphs $G$ are always assumed to be simple, and hence $\hom(H/P,G) = 0$ if $H/P$ has a loop.} 
	between $U_i$ and $U_j$ if and only if there are $u_i \in U_i, u_j \in U_j$ such that $\{u_i,u_j\} \in E(H)$. The following is our main result for $\textsc{inj-cnt}_{H}$. 
	
	

	\begin{theorem}\label{thm:inj_corollary}
		For a graph $H$, $\textsc{inj-cnt}_{H}$ is easy if and only if 
		every quotient graph of $H$ is induced $C_k$-free for all $k \geq 6$. 
		The ``only if" part is under Conjecture \ref{conj:triangle_detection}. 
	\end{theorem}

	One can use Theorem \ref{thm:inj_corollary} to easily obtain the (finite) list of minimal obstructions (forbidden induced subgraphs) for $\textsc{inj-cnt}_{H}$ being easy. However, since this is somewhat lengthy, we shall not give the \nolinebreak details. 

	The reason for the appearance of quotient graphs in Theorem \ref{thm:inj_corollary} is that they can be used to relate homomorphism counts to injective homomorphism counts. Indeed, it is well-known and easy to show (see \cite[Section 5.2.3]{Lovasz}) that 
	\begin{equation}\label{eq:hom-inj}
	\hom(H,G) = \sum_{P}{\inj(H/P,G)},
	\end{equation}
	where $P$ runs over all partitions of $V(H)$. Equation \eqref{eq:hom-inj} can be thought of as a relation over the poset of all partitions of $V(H)$. As is well-known \cite[Section 5.2.3]{Lovasz}, one can invert this relation using M\"{o}bius inversion, thus obtaining the following:
	\begin{equation}\label{eq:hom-inj_inverted}
	\inj(H,G) = \sum_{P} {\mu_{\text{part}}(P) \cdot \hom(H/P,G)}.
	\end{equation}
	Here, $\mu_{\text{part}}$ is the M\"{o}bius function of the partition poset. 
	
	
	Equation \eqref{eq:hom-inj_inverted} expresses $\inj(H,G)$ as a linear combination of $\hom(H/P,G)$ (where $P$ runs over all partitions of $V(H)$). 
	As mentioned above, this means that computing $\inj(H,G)$ is exactly as hard as computing $\hom(H/P,G)$ for all $P$. Thus we have the following:
	\begin{lemma}\label{thm:inj_computation}
		Let $H$ be a graph. Then $\textsc{inj-cnt}_{H}$ is easy if and only if $\textsc{hom-cnt}_{H/P}$ is easy for every partition $P$ of $V(H)$.  
	\end{lemma}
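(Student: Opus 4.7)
The plan is to deduce both directions of the lemma from the Möbius inversion identity \eqref{eq:hom-inj_inverted} combined with the homomorphism-linear-combination technique encapsulated in Lemma \ref{thm:hom_linear_combination}.

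The ``if'' direction is immediate. Suppose $\textsc{hom-cnt}_{H/P}$ is easy for every partition $P$ of $V(H)$. Since the number of such partitions depends only on $|V(H)|$ and is thus a constant, each term $\hom(H/P,G)$ on the right-hand side of \eqref{eq:hom-inj_inverted} can be computed in linear time, and summing them with the Möbius weights yields $\inj(H,G)$ in linear time.

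For the ``only if'' direction, suppose $\textsc{inj-cnt}_H$ is easy, and let $H_1,\dots,H_k$ be the pairwise non-isomorphic graphs among the quotients $H/P$, as $P$ ranges over all partitions of $V(H)$. Grouping the right-hand side of \eqref{eq:hom-inj_inverted} by isomorphism class yields
\[
\inj(H,G) \;=\; \sum_{j=1}^{k} a_j \cdot \hom(H_j, G), \qquad \text{where}\quad a_j \;=\; \sum_{P:\, H/P \cong H_j} \mu_{\text{part}}(P).
\]
The main (and indeed the only) technical point in applying Lemma \ref{thm:hom_linear_combination} to this identity will be to verify that every coefficient $a_j$ is non-zero. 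I expect this to follow cleanly from the sign structure of the partition-lattice Möbius function: every partition $P$ contributing to $a_j$ must have exactly $|V(H_j)|$ blocks (because the quotient graph has one vertex per block), so the standard formula $\mu_{\text{part}}(P) = (-1)^{n-|P|} \prod_{B\in P}(|B|-1)!$ assigns the same sign $(-1)^{n-|V(H_j)|}$ and a strictly positive magnitude to every contribution; hence $a_j$ is a sum of nonzero terms of a common sign and is therefore non-zero.

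Granting this non-vanishing, Lemma \ref{thm:hom_linear_combination} (the tensor-product plus Erdős–Lovász–Spencer machinery) implies that computing the displayed linear combination is as hard as computing each $\hom(H_j,G)$ separately, so the easiness of $\inj(H,G)$ transfers to the easiness of $\textsc{hom-cnt}_{H_j}$ for every $j$, and hence $\textsc{hom-cnt}_{H/P}$ is easy for every partition $P$ of $V(H)$, as desired.
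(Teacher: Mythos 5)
Your proposal is correct and follows essentially the same route as the paper: the ``if'' direction is read off directly from \eqref{eq:hom-inj_inverted}, and the ``only if'' direction groups the Möbius sum by isomorphism type of $H/P$, observes via \eqref{eq:Frucht_Rota_Schutzenberger} that all partitions with isomorphic quotients have the same number of blocks and hence Möbius values of a common sign, so the grouped coefficients are nonzero, and then applies Lemma~\ref{thm:hom_linear_combination}. This is exactly what the paper does in Lemma~\ref{thm:inj_computation_general}.
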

	A similar result has appeared in \cite{CDM}. 
	It is easy to see that Lemma \ref{thm:inj_computation} and Corollary \ref{cor:charac} together imply Theorem \ref{thm:inj_corollary}. 
	A corollary of Lemma \ref{thm:inj_computation} is that $\textsc{inj-cnt}_{H}$ is at least as hard as $\textsc{hom-cnt}_{H}$.

	
	\subsection{From copies to induced copies}\label{subsec:ind}
	In this subsection we apply the results of the previous two subsections in order to obtain a characterization of the graphs $H$ for which $\textsc{ind-cnt}_{H}$ is easy. Here we shall also take the further (simple) step of identifying the minimal obstructions for $\textsc{ind-cnt}_{H}$ being easy. This is done in Theorem \ref{thm:ind_corollary}.
	Here and throughout the paper, a {\em supergraph} of $H$ is any graph on $V(H)$ of which $H$ is a subgraph. 


	\begin{theorem}\label{thm:ind_corollary}
		For a graph $H$,  $\textsc{ind-cnt}_{H}$ is easy if and only if
		$H$ does not contain as an induced subgraph any (not necessarily induced) spanning subgraph of $C_6$. The ``only if" part is under Conjecture \ref{conj:triangle_detection}.  
	\end{theorem}

	Theorem \ref{thm:ind_corollary} implies, for example, that  $\textsc{ind-cnt}_{H}$ is hard if $H$ contains independent set of size $6$, an induced path of length $5$, or an induced matching of size $3$. In the positive direction, we can conclude, for example, that $\textsc{ind-cnt}_{H}$ is easy whenever $H$ has no independent set of size $3$ (i.e., is the complement of a triangle-free graph).

	Theorem \ref{thm:ind_corollary} is derived from a result analogous to Lemma \ref{thm:inj_computation}, this time relating the problems $\textsc{ind-cnt}$ and $\textsc{inj-cnt}$. Again, this relation has been previously used in \cite{CDM}. It is well-known and easy to see that the following holds for every pair of graphs $G,H$. 
	\begin{equation}\label{eq:inj-ind}
	\inj(H,G) = \sum_{E \subseteq \binom{V(H)}{2} \setminus E(H)}{\ind(H \cup E,G)}.	
	\end{equation}
	Here, $H \cup E$ is the graph obtained from $H$ by adding to it all edges in $E$. Hence, $H \cup E$ runs over all supergraphs of $H$.  
	The equation \eqref{eq:inj-ind} can be thought of as a relation over the boolean poset of all subsets of $\binom{V(H)}{2} \setminus E(H)$. Just like \eqref{eq:hom-inj}, it is well-known that this relation can be inverted using M\"{o}bius inversion (which in this case boils down to the inclusion-exclusion principle, see \cite[Section 5.2.3]{Lovasz}). The resulting inverted relation is:
	\begin{equation}\label{eq:inj-ind_inverted}
	\ind(H,G) = \sum_{E \subseteq \binom{V(H)}{2} \setminus E(H)}{(-1)^{|E|} \cdot \inj(H \cup E,G)}.
	\end{equation}
	

	Equation \eqref{eq:inj-ind_inverted} shows that computing $\inj(H',G)$ for every supergraph $H'$ of $H$ is sufficient for computing $\ind(H,G)$. The following lemma states that it is also necessary. The lemma also gives a (seemingly weaker but in fact equivalent) necessary and sufficient condition in terms of computing $\hom(H',G)$ for every supergraph $H'$ of $H$. 
	
	\begin{lemma}\label{thm:ind_computation}
		For every graph $H$, the following are equivalent.
		\begin{enumerate}
			\item $\textsc{ind-cnt}_{H}$ is easy. 
			\item $\textsc{inj-cnt}_{H'}$ is easy for every supergraph $H'$ of $H$.
			\item $\textsc{hom-cnt}_{H'}$ is easy for every supergraph $H'$ of $H$.
		\end{enumerate}
	\end{lemma}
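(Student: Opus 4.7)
The plan is to prove the cycle of implications $(2) \Rightarrow (1) \Rightarrow (3) \Rightarrow (2)$.

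The direction $(2)\Rightarrow(1)$ is immediate from \eqref{eq:inj-ind_inverted}: it expresses $\ind(H,G)$ as a signed sum of constantly many $\inj(H\cup E,G)$-values, so linear-time computability of each $\inj$-term transfers to $\ind(H,G)$.

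For $(1)\Rightarrow(3)$, the idea is to express $\ind(H,G)$ as a linear combination of homomorphism counts and to invoke the linear-combinations technique (Lemma \ref{thm:hom_linear_combination}). Substituting \eqref{eq:hom-inj_inverted} into \eqref{eq:inj-ind_inverted} yields
\[
\ind(H,G) \;=\; \sum_{E,P}\,(-1)^{|E|}\,\mu_{\text{part}}(P)\cdot \hom\!\big((H\cup E)/P,\,G\big),
\]
which, after grouping by isomorphism type, takes the form $\ind(H,G)=\sum_F c_F\,\hom(F,G)$ with $F$ ranging over graphs on at most $n=|V(H)|$ vertices. Since $(H\cup E)/P$ has $|P|$ vertices, any $F$ on exactly $n$ vertices must arise from $P=\text{identity}$. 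Hence for a supergraph $H'$ of $H$ we get $c_{H'} = (-1)^{|E(H')|-|E(H)|}\cdot N_{H'}$, where $N_{H'}$ counts the labeled supergraphs of $H$ on $V(H)$ that are isomorphic to $H'$; this count is at least $1$ (witnessed by the given labeling), and all contributing $E$'s share the same exponent $|E|=|E(H')|-|E(H)|$, so no cancellation occurs. Thus $c_{H'}\neq 0$, and Lemma \ref{thm:hom_linear_combination} gives that $\textsc{hom-cnt}_{H'}$ is easy for every supergraph $H'$ of $H$.

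For $(3)\Rightarrow(2)$, fix any supergraph $H'$ of $H$. By Lemma \ref{thm:inj_computation}, it suffices to show that $\textsc{hom-cnt}_{H'/P}$ is easy for every partition $P$ of $V(H')$. The crucial structural claim is that \emph{$H'/P$ is always isomorphic to an induced subgraph of some supergraph of $H$.} To verify this, write $P=\{U_1,\dots,U_k\}$, pick a representative $s_i\in U_i$, set $S=\{s_1,\dots,s_k\}$, and define
\[
F \;=\; H \,\cup\, \{\{s_i,s_j\}:\{U_i,U_j\}\in E(H'/P)\}.
\]
Then $F$ is a supergraph of $H$, and the bijection $s_i\mapsto U_i$ is an isomorphism $F[S]\to H'/P$. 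The only point to check is that an incidental edge $\{s_i,s_j\}\in E(H)$ inside $S$ does not spoil this: but $E(H)\subseteq E(H')$ immediately forces $\{U_i,U_j\}\in E(H'/P)$, so such edges are present in $F[S]$ for the right reason. Combining this claim with (3) and Lemma \ref{thm:hom_cnt_hereditary} yields that $\textsc{hom-cnt}_{H'/P}$ is easy, as required.

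The main hurdle is the structural claim used in $(3)\Rightarrow(2)$: converting a quotient $H'/P$ of a supergraph of $H$ into an induced substructure inside a (possibly different) supergraph of $H$. The construction above does exactly that, exploiting the fact that non-edges of $H'/P$ force non-edges of $H$ between the corresponding blocks; once this is in hand, the three implications close into the cycle above.
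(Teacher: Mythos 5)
Your proof is correct and follows essentially the same route as the paper: $(2)\Rightarrow(1)$ via \eqref{eq:inj-ind_inverted}; $(1)\Rightarrow(3)$ by composing \eqref{eq:inj-ind_inverted} and \eqref{eq:hom-inj_inverted}, observing that for a supergraph $H'$ the only contributing pairs $(E,P)$ have $P$ trivial so the coefficient $c_{H'}$ cannot vanish, and then invoking Lemma~\ref{thm:hom_linear_combination}; and $(3)\Rightarrow(2)$ by showing every quotient $H'/P$ sits as an induced subgraph inside some supergraph of $H$ and appealing to Lemma~\ref{thm:hom_cnt_hereditary}. The only cosmetic difference is in the $(3)\Rightarrow(2)$ construction: the paper builds $H''$ by adding all edges between the relevant blocks of $P$ and then restricting to a transversal, whereas you add edges only between fixed representatives $s_i$, but both yield the same induced copy of $H'/P$ and rely on the same observation that $E(H)\subseteq E(H')$ forces incidental intra-transversal edges to be legitimate.
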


	Given the analogy between \eqref{eq:hom-inj_inverted} and \eqref{eq:inj-ind_inverted}, one could hope for a reduction between $\textsc{ind-cnt}$ and $\textsc{inj-cnt}$ that does not go through $\textsc{hom-cnt}$. We are not aware of such a reduction, however. Instead, we again exploit the homomorphism-linear-combination framework in order to reduce $\textsc{hom-cnt}_{H'}$ (for all supergraphs $H'$ of $H$) to $\textsc{ind-cnt}_H$. We then complete the picture by proving that solving $\textsc{hom-cnt}_{H'}$ for all supergraphs $H'$ of $H$ allows one to solve $\textsc{inj-cnt}_{H'}$ for all such $H'$, which in turn allows one to solve $\textsc{ind-cnt}_H$ using \eqref{eq:inj-ind_inverted}. This step (i.e. proving the implication $3 \Rightarrow 2$) requires Lemma \ref{thm:hom_cnt_hereditary}.
	
	A corollary of Lemma \ref{thm:ind_computation} is that $\textsc{ind-cnt}_{H}$ is at least as hard as $\textsc{inj-cnt}_{H}$ (which itself is at least as hard as $\textsc{hom-cnt}_{H}$).
	At the end of Section \ref{sec:ind}, we give the simple derivation of Theorem \ref{thm:ind_corollary} from Lemma \ref{thm:ind_computation} and Corollary \ref{cor:charac}.

	We conclude by noting that the reductions used to prove Lemmas \ref{thm:hom_cnt_hereditary}, \ref{thm:inj_computation} and \ref{thm:ind_computation} are more robust than is stated in those lemmas. 
	Namely, these reductions pertain not only to algorithms running in time $\tilde{O}(n)$, 
	but to larger running times as well.
	See Lemmas \ref{thm:hom_cnt_hereditary_general}, \ref{thm:inj_computation_general} and \ref{thm:ind_computation_general} for the general statements. 

\subsection{Counting homomorphisms in linear time in general graphs}\label{subsec:general_graphs}


In this section we obtain a characterization of the graphs $H$ such that $\hom(H,\cdot)$ can be computed in (near-)linear time in {\em general} (i.e., not necessarily degenerate) input graphs. Dalmau and Jonsson \cite{DJ} have shown that the complexity of counting $H$-homomorphisms (in general graphs) is essentially controled by the tree-width $\text{tw}(H)$ of $H$: it had been previously shown (see \cite[Proposition 7]{FG}) that $\hom(H,\cdot)$ can be computed in time $O(n^{\text{tw}(H) + 1})$ in $n$-vertex graphs; and conversely, Dalmau and Jonsson \cite{DJ} have shown that there is a function $f : \mathbb{N} \rightarrow \mathbb{N}$ with $f(t) \rightarrow \infty$ (as $t \rightarrow \infty$), such that $\hom(H,\cdot)$ cannot be computed in time $O(n^{f(\text{tw}(H))})$ in $n$-vertex graphs (under the assumption that $\text{FPT}$ does not equal $\#\text{W}[1]$). Here we obtain a sharper\footnote{
Comparing Theorem \ref{thm:charac_general_graphs} to the result of \cite{DJ}, we note that a special case of the latter also uses a reduction from the triangle-counting problem; it shows that, assuming Conjecture \ref{conj:triangle_detection}, $\hom(H,\cdot)$ cannot be computed in time $O(|V(G)|)$ whenever $H$ contains the $3 \times 3$ grid as a minor. Evidently, this condition does not capture all non-forest graphs $H$ (cf. Theorem \ref{thm:charac_general_graphs}).
} 
result for the special case of linear runtime; we show that $\hom(H,\cdot)$ can be computed in linear time {\em if and only if} $\text{tw}(H) = 1$, namely $H$ is a forest. 
\begin{theorem}\label{thm:charac_general_graphs}
	For a graph $H$, $\hom(H,G)$ can be computed in time $\tilde{O}(|V(G)| + |E(G)|)$ if and only if $H$ is a forest. The ``only if" part is under Conjecture \ref{conj:triangle_detection}.  
\end{theorem}

	\paragraph{Some Open Questions}
	By a series of works~\cite{BPS,BPS_SODA21}, including the present paper, 
	we have gained a deep understanding of what kind of patterns can be 
	counted in near-linear time in degenerate graphs. However, from a 
	theoretical standpoint, it would be interesting to explore beyond linear-time algorithms. Specifically, we pose the following question:
	{\em Can we characterize patterns that are countable in quadratic \nolinebreak time?}
	
	A more general question is to determine to which extent the DAG treewidth $\tau_1(H)$ controls the runtime of counting $H$-homomorphisms in bounded-degeneracy graphs. 
	There exist small graphs $H$ for which $\textsc{hom-cnt}_H$ can be solved in time $n^{k}$ for $k < \tau_1(H)$, meaning that the exponent does not always equal $\tau_1(H)$. Still, it could be the case that counting $H$-homomorphisms (in bounded-degeneracy graphs) requires time $n^{f(\tau_1(H))}$, for a function $f(t) : \mathbb{N} \rightarrow \mathbb{N}$ which tends to infinity with $t$; namely, the runtime exponent grows with $\tau_1(H)$. 
	It would be very interesting to determine whether this is indeed the case\footnote{As mentioned above, a result of this type for counting homomorphisms in general graphs is known \cite{DJ}; in this setting, the appropriate parameter is treewidth (and not DAG treewidth).}.
	This question was also considered in the very recent paper of Bressan and Roth \cite{Bressan_Roth}. While the question for $\textsc{hom-cnt}_H$ remains open, \cite{Bressan_Roth} obtained nearly tight bounds for the problems $\textsc{inj-cnt}_H$ and $\textsc{ind-cnt}_H$; for these problems, the parameters governing the runtime exponent are respectively the induced matching number of $H$ (for $\textsc{inj-cnt}_H$) and the independence number of $H$ (for $\textsc{ind-cnt}_H$). These results are incomparable to ours, since we obtain a precise characterization for the case of linear runtime (see Theorems \ref{thm:inj_corollary} and \ref{thm:ind_corollary}), while the results of \cite{Bressan_Roth} are about the asymptotics of the runtime exponent.

	\paragraph{Note}
	The main results presented in this paper have been obtained independently by two groups: one consisting of the first and fourth authors and N. Pashanasangi, and the other consisting of the second, third and fifth authors. In particular, in a preliminary version of the work, Bera et al.~\cite{BPS_SODA21} effectively proved the results stated in Theorems \ref{thm:main} and  \ref{thm:alpha-acyclic-no-cycles}. In their presentation, they use the notion of DAG treewidth instead of the language of $\alpha$-acyclicity.  
	Since the results obtained by the two groups were very similar, we decided to prepare a unified paper. 
	
	\paragraph{Paper organization}
	The rest of the paper is organized as follows. Section \ref{sec:charac} contains the proof of Theorem \ref{thm:alpha-acyclic-no-cycles}. 
	Section \ref{sec:cycles} is devoted to proving Lemma \ref{thm:hardness_cycles}. The proofs of Lemmas \ref{thm:hom_cnt_hereditary}, \ref{thm:inj_computation} and \ref{thm:ind_computation} appear in Sections \ref{sec:hereditary}, \ref{sec:inj} and \ref{sec:ind}, respectively, with Section \ref{sec:ind} also containing the proof of Theorem \ref{thm:ind_corollary}. Finally, in Section \ref{sec:general_graphs} we prove Theorem \ref{thm:charac_general_graphs}.
	
%
	
	\section{A Characterization of $\alpha$-acyclic Graphs: Proof of Theorem \ref{thm:alpha-acyclic-no-cycles}}\label{sec:charac}
	%
	%
	
	The key step in the proof of Theorem \ref{thm:alpha-acyclic-no-cycles} is the following lemma. Recall that for a vertex $u$ in a digraph, $R(u)$ denotes the set of all vertices reachable from $u$. 
	
	\begin{lemma}\label{lem:x_i-in-two-consecutive-T_i-has-cycle}
		Let $\Hdir$ be an acyclic orientation of an undirected graph $H$. Let $k \geq 3$, and assume that there exist distinct vertices 
		$u_0, \dots, u_{k-1},x_0,\dots,x_{k-1} \in V(\Hdir)$ such that for all $0 \leq i \leq k-1$, we have that $x_i \in R(u_{i-1}) \cap R(u_i)$ and $x_i \notin R(u_j)$ for all $j \neq i, i-1$ (with indices taken modulo $k$). Then, $H$ contains an induced copy of a cycle $C_{\ell}$ for some $\ell \geq 6$.
	\end{lemma}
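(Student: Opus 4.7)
The plan is to construct an induced cycle of length at least $6$ in $H$ by choosing short witness paths for the reachability conditions and stitching them into a closed walk around the cyclic configuration. For each $i \in \{0, \dots, k-1\}$, let $P_i$ be a shortest directed path from $u_i$ to $x_{i+1}$ in $\Hdir$, and let $Q_i$ be a shortest directed path from $u_i$ to $x_i$. Consider the closed walk $W$ in $H$ obtained by concatenating $P_0, \bar{Q}_1, P_1, \bar{Q}_2, \dots, P_{k-1}, \bar{Q}_0$, where $\bar{Q}$ denotes the reversal of $Q$. The total length of $W$ is at least $2k \geq 6$. Since each $P_i$ and each $Q_i$ is a shortest directed path in a DAG, it admits no chord in $H$: any chord would either yield a shorter directed path or witness a directed cycle in $\Hdir$.

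The key reachability observation is that whenever $v \in R(u_i)$, one has $R(v) \subseteq R(u_i)$, and together with the hypothesis $x_j \notin R(u_l)$ for $l \notin \{j-1, j\}$ this gives $R(v) \cap \{x_0, \dots, x_{k-1}\} \subseteq \{x_i, x_{i+1}\}$. I would apply this observation to deduce that $P_i \cap P_j = \varnothing$ and $Q_i \cap Q_j = \varnothing$ whenever $i \neq j$, and that $P_i \cap Q_j = \varnothing$ unless $j \in \{i, i+1\}$; indeed, a common vertex would force the two-element sets $\{x_i, x_{i+1}\}$ and $\{x_j, x_{j+1}\}$ to overlap in an impossible way once $k \geq 3$. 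For the two remaining cases $j = i$ and $j = i+1$, I would reduce the configuration to a \emph{tight} one: if $P_i$ and $Q_i$ share a vertex other than $u_i$, then some proper descendant $v$ of $u_i$ reaches both $x_i$ and $x_{i+1}$, and the observation shows $v$ is distinct from every other $u_l$ and $x_l$, so we may replace $u_i$ by $v$; symmetrically, if $P_i$ and $Q_{i+1}$ share a vertex other than $x_{i+1}$, then some proper ancestor $w$ of $x_{i+1}$ is reachable from both $u_i$ and $u_{i+1}$, and we may replace $x_{i+1}$ by $w$. Both modifications preserve all hypotheses of the lemma, so after finitely many iterations $P_i \cap Q_i = \{u_i\}$ and $P_i \cap Q_{i+1} = \{x_{i+1}\}$ for every $i$, and the walk $W$ becomes a simple cycle in $H$ of length at least $2k \geq 6$.

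It remains to verify that $W$ has no chord in $H$, which would then make it an induced $C_\ell$ with $\ell \geq 6$. Suppose towards contradiction that $\{a,b\}$ is such a chord, oriented $a \to b$ in $\Hdir$, so that $R(b) \subseteq R(a)$. A case analysis on which pair of paths $P_i, Q_i$ contain $a$ and $b$, combined with the reachability observation and the shortest-path property of $P_j, Q_j$, shows that any such chord either (i) forces some $u_j$ to reach a forbidden $x_l$, contradicting the hypothesis, or (ii) yields a strictly shorter directed path from some $u_j$ to some $x_l$, contradicting the minimality of $P_j$ or $Q_j$. I expect the main technical obstacle to lie in the case where $a$ and $b$ belong to cyclically adjacent paths such as $P_i$ and $P_{i-1}$: here the reachability observation alone does not rule out the chord, and one must combine it with the tight-configuration reduction and a careful accounting of the shortcut the chord creates, to show that either such a chord cannot exist or that the alternative cycle it produces still wraps around at least three of the cyclic indices and therefore has length at least $6$.
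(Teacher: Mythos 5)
Your overall plan mirrors the paper's proof: choose shortest witness paths realizing the reachability conditions, concatenate them into a closed walk, and show the walk is (after a reduction) a simple induced cycle of length at least $2k \geq 6$. Your reachability observation is exactly the paper's Fact \ref{obs:alpha-acyclic_good_2k-tuple}, and your argument that $P_i \cap P_j = \varnothing$, $Q_i \cap Q_j = \varnothing$, and $P_i \cap Q_j = \varnothing$ for distant $i,j$ is sound. However, there is a genuine gap, and you have in fact flagged it yourself.

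The problem is in the chord analysis. Your iterative ``tight-configuration'' reduction only fires when two paths share a \emph{vertex} beyond their common endpoint; when it terminates you know the walk is a simple cycle, but you have no control over chords. Individual-path minimality rules out chords with both endpoints on one $P_i$ or $Q_i$, but it says nothing about a chord $(z_1,z_2)$ running between two different paths. For chords between paths sharing an endpoint (your $(Q_i,P_i)$ or $(P_{i-1},Q_i)$ pairs), a single replacement of $u_i$ or $x_i$ does shorten the configuration, but your iteration would never have been triggered because there was no shared vertex to begin with. Worse is the case you explicitly punt on: a chord between $Q_i$ and $Q_{i+1}$, or between $P_i$ and $P_{i-1}$, which share no endpoint at all. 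Here a chord $(z_1,z_2)$ with $z_1 \in P_i$, $z_2 \in P_{i-1}$ forces a \emph{simultaneous double replacement}: replace $u_i$ by $z_1$ \emph{and} $x_i$ by $z_2$, then check the new $2k$-tuple is still good and has strictly smaller total witness-path length. Your iteration, which looks only for shared vertices and performs one replacement at a time, never discovers this move. Your fallback suggestion---that the chord's shortcut ``still wraps around at least three of the cyclic indices and therefore has length at least $6$''---is not established and is not how the paper closes this case either.

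The clean fix, which is what the paper does, is to discard the local iteration entirely and instead \emph{choose from the outset a good $2k$-tuple $(v_0,\dots,v_{k-1},y_0,\dots,y_{k-1})$ that globally minimizes} $\sum_i \bigl(\didist(v_i,y_i) + \didist(v_{i-1},y_i)\bigr)$. Then every problematic structure---a shared internal vertex, or a chord of any of the types above---yields a replacement (single or double) producing a new good tuple of strictly smaller total, contradicting minimality. This single global extremal choice does all the work that your iteration does plus the chord cases it misses, and avoids having to argue termination separately. You should also note that after a double replacement one must re-verify that the $2k$ vertices remain pairwise distinct; the paper uses Fact \ref{obs:alpha-acyclic_good_2k-tuple} for this, and you would need the analogous check.
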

	
	\begin{proof}
		We say that a $2k$-tuple $(v_0, \dots, v_{k-1}, y_0, \dots, y_{k-1})$ of distinct vertices of $\Hdir$ is {\em good} if for every $0 \leq i \leq k-1$,
		$y_i \in R(v_i) \cap R(v_{i-1})$ and $y_i \notin R(v_j)$ for all $j \neq i, i-1$ (with indices taken modulo $k$). In other words, $(v_0, \dots, v_{k-1}, y_0, \dots, y_{k-1})$ is good if and only if for every $0 \leq i \leq k-1$, there are directed paths from $v_i$ to $y_i$ and $y_{i+1}$, and there is no directed path from $v_i$ to $y_j$ for any $j \neq i, i+1$.
		By assumption, the tuple $(u_0, \dots, u_{k-1}, x_0, \dots, x_{k-1})$ is good, implying that the set of good $2k$-tuples is non-empty. 
		Observe that for a good $2k$-tuple $(v_0, \dots, v_{k-1}, y_0, \dots, y_{k-1})$ and $0 \leq i \leq k-1$, it holds that $R(v_i) \cap R(v_{i-1}) \cap \{y_0,\dots,y_{k-1}\} = \{y_i\}$ (here we use the assumption that $k \geq 3$). This implies that $R(v_i) \cap R(v_{i-1}) \cap \{v_0,\dots,v_{k-1}\} = \emptyset$, because if $v_j \in R(v_i) \cap R(v_{i-1})$ (for some $0 \leq j \leq k-1$), then $y_j,y_{j+1} \in R(v_j) \subseteq R(v_i) \cap R(v_{i-1})$, which we just ruled out. Similarly, the definition of a good $2k$-tuple implies that if $y_i,y_{i+1} \in R(v_j)$ (for some $0 \leq i,j \leq k-1$), then $j = i$ (again, we are using here the assumption that $k \geq 3$). This implies that there are no $0 \leq i,j \leq k-1$ such that $y_i,y_{i-1} \in R(y_j)$, since otherwise we would have $y_i,y_{i-1} \in R(y_j) \subseteq R(v_j) \cap R(v_{j-1})$, which is impossible since we cannot have both $j = i$ and $j-1 = i$. 
		We have thus established the following fact, which will be used several \nolinebreak times. 
		\begin{fact}\label{obs:alpha-acyclic_good_2k-tuple}
			Let $(v_0,\dots,v_{k-1},y_0,\dots,y_{k-1})$ be a good tuple,
			let $z \in \{v_0,\dots,v_{k-1},y_0,\dots,y_{k-1}\}$ and let $0 \leq i \leq k-1$. If $z \in R(v_i) \cap R(v_{i-1})$ then $z = y_i$, and if $y_i,y_{i+1} \in R(z)$ then $z = v_i$. 
		\end{fact}

		For vertices $a,b \in V(H) = V(\Hdir)$, denote by $\didist(a,b)$ the length of a shortest directed path from $a$ to $b$ (in $\Hdir$). 
		Now, fix a good $2k$-tuple $M = (v_0, \dots, v_{k-1}, y_0, \dots, y_{k-1})$ which minimizes the sum
		\begin{equation}\label{eq:sum-paths}
		\sum_{i=0}^{k-1} \left( \didist(v_i, y_i)+\didist(v_{i-1}, y_i) \right).
		\end{equation}
		
		Let us now fix specific shortest (directed) paths $P(v_i, y_j)$ from $v_i$ to $y_j$ for $0 \leq i \leq k-1$ and $j=i, i+1$ (as always, indices are taken modulo $k$). We will denote by $P\{v_i,y_j\}$ the underlying undirected path of $P(v_i,y_j)$.
		Let $C$ be the (undirected) closed walk obtained by concatenating \nolinebreak the \nolinebreak paths 
		\begin{equation}\label{eq:paths}
		P\{y_0, v_0\},P\{v_0, y_1\},P\{y_1, v_1\},\dots,P\{v_{k-2}, y_{k-1}\},P\{y_{k-1}, v_{k-1}\},P\{v_{k-1}, y_0\}.
		\end{equation}
		We now show that $C$ is a simple cycle. We will then show that $C$ is induced. While the proof idea is rather simple, the details are somewhat lengthy.
		
		Two paths appearing consecutively (in a cyclic manner) in \eqref{eq:paths} will be called consecutive. 
		In other words, the pairs of consecutive paths are $(P\{y_i,v_i\},P\{v_i,y_{i+1}\})$ and $(P\{v_i,y_{i+1}\},P\{y_{i+1},v_{i+1}\})$ (for $0 \leq i \leq k-1$).
		If, by contradiction, $C$ is not a simple cycle, then either there is a pair of non-consecutive paths which intersect, or a pair of consecutive paths which intersect outside of the endpoint they share. We now rule out each of these possibilities. 
		
		\noindent \textbf{Case 1:} We consider the intersection of $P(v_i, y_i)$ and $P(v_i, y_{i+1})$ for some $0 \leq i \leq k-1$. Assume that there exists a vertex $z \neq v_i$ such that $z \in P(v_i, y_i) \cap P(v_i, y_{i+1})$. 
		Then $y_i,y_{i+1} \in R(z)$. By Fact \ref{obs:alpha-acyclic_good_2k-tuple}, we have $z \notin \{v_0,\dots,v_{k-1},y_0,\dots,y_{k-1}\}$. 
		Now, replacing $v_i$ with $z$ in the tuple $M$, we get a new $2k$-tuple of distinct vertices which is also good. Indeed, there are paths from $z$ to $y_i, y_{i+1}$, and there is no path from $z$ to $y_j$ for all $j \neq i, i+1$, as otherwise we would have a path from $v_i$ to $y_j$ (via $z$), contradicting the assumption. Since $z \neq v_i$, the two new paths we get by replacing $v_i$ with $z$ are strictly shorter than the two original ones, which contradicts the minimality of \eqref{eq:sum-paths}.
		
		\noindent \textbf{Case 2:} We consider the intersection of $P(v_{i-1}, y_i)$ and $P(v_i, y_i)$ for some $0 \leq i \leq k-1$. Assume that there exists a vertex $z \neq y_i$ such that $z \in P(v_{i-1}, y_i) \cap P(v_i, y_i)$. Then $z \in R(v_i) \cap R(v_{i-1})$. By Fact \ref{obs:alpha-acyclic_good_2k-tuple}, $z \notin \{v_0,\dots,v_{k-1},y_0,\dots,y_{k-1}\}$.
		Now, replacing $y_i$ with $z$ in the tuple $M$, we get a new $2k$-tuple of distinct vertices which is also good. Indeed, there are paths from $v_{i-1}, v_i$ to $z$, and there is no path from $v_j$ to $z$ for any $j \neq i, i-1$, as otherwise we would have a path from $v_j$ to $y_i$ (via $z$), which contradicts the assumption. Since $z \neq y_i$, the two new paths we get by replacing $y_i$ with $z$ are strictly shorter than the two original ones, which contradicts the minimality of \eqref{eq:sum-paths}.
		
		\noindent \textbf{Case 3:} We consider the intersection of $P(v_i, y_s)$ and $P(v_j, y_t)$ for some $i \neq j$ and $s \neq t$ (where $s \in \{i, i+1\}$ and $t \in \{j, j+1\}$). Assume that there exists a vertex $z$ such that $z \in P(v_i, y_s) \cap P(v_j, y_t)$. In this case, as there are paths from $v_i$ and $v_j$ to $y_s, y_t$ (via $z$), we must have $\{s,t\} = \{i,i+1\} = \{j,j+1\}$, which implies that $i = j$ (as $k \geq 3$), in contradiction to the assumption that $i \neq j$. 
		
		From Cases 1-3 it follows that the closed walk $C$ is indeed a simple cycle in $H$.
		We now prove that $C$ is an induced cycle. We first observe that for a path $P(v_i, y_j)$ with $j \in {i, i+1}$ and two vertices $z_1, z_2 \in P(v_i, y_j)$ such that $z_2$ comes after $z_1$ along the path, we cannot have the edge $(z_2, z_1)$ as $\Hdir$ is acyclic. In addition, we cannot have the edge $(z_1, z_2)$, unless it is an edge of the path, as this would contradict the fact that $P(v_i, y_j)$ is a shortest path from $v_i$ to $y_j$. Thus, the (undirected) path $P\{v_i,y_j\}$ is induced. 
		We conclude that if $C$ has a chord, then it must connect two distinct paths among the paths in \eqref{eq:paths}. We now rule out the existence of such a chord by analyzing several cases.
		
		\noindent \textbf{Case 1:} Consider an edge between the two paths $P(v_i, y_i)$ and $P(v_i, y_{i+1})$ for some $0 \leq i \leq k-1$. 
		We assume, without loss of generality, that there is an edge $(z_1, z_2) \in E(\Hdir)$ such that $z_1 \in P(v_i, y_i)$, $z_2 \in P(v_i, y_{i+1})$, and $z_1, z_2 \neq v_i$. Then $y_i \in R(z_1)$ and $y_{i+1} \in R(z_2) \subseteq R(z_1)$. By Fact \ref{obs:alpha-acyclic_good_2k-tuple}, $z_1 \notin \{v_0,\dots,v_{k-1},y_0,\dots,y_{k-1}\}$. 
		Now, replacing $v_i$ with $z_1$ in the tuple $M$, we get a new $2k$-tuple of distinct vertices which is also good. Indeed, there are paths from $z_1$ to $y_i, y_{i+1}$, and there is no path from $z_1$ to $y_j$ for all $j \neq i, i+1$, as otherwise we would have a path from $v_i$ to $y_j$ (via $z_1$), which is impossible. Since $z_1, z_2 \neq v_i$, we have 
		$\didist(z_1,y_i) + \didist(z_1,y_{i+1}) < \didist(v_i,y_i) + \didist(v_i,y_{i+1})$, which contradicts the minimality of \eqref{eq:sum-paths}.
		
		\noindent \textbf{Case 2:} Consider an edge between the two paths $P(v_{i-1}, y_i)$ and $P(v_i, y_i)$ for some $0 \leq i \leq k-1$. 
		We assume, without loss of generality, that there is an edge $(z_1, z_2) \in E(\Hdir)$ such that $z_1 \in P(v_{i-1}, y_i)$, $z_2 \in P(v_i, y_i)$, and $z_1, z_2 \neq y_i$.
		Then $z_2 \in R(v_i) \cap R(z_1) \subseteq R(v_i) \cap R(v_{i-1})$. By Fact \ref{obs:alpha-acyclic_good_2k-tuple}, $z_2 \notin \{v_0,\dots,v_{k-1},y_0,\dots,y_{k-1}\}$.
		Now, replacing $y_i$ with $z_2$ in the tuple $M$, we get a new $2k$-tuple of distinct vertices which is also good. Indeed, there are paths from $v_{i-1}, v_i$ to $z_2$, and there is no path from $v_j$ to $z_2$ for all $j \neq i, i-1$, as otherwise we would have a path from $v_j$ to $y_i$ (via $z_2$), which is impossible. Since $z_1, z_2 \neq y_i$, we have 
		$\didist(v_i,z_2) + \didist(v_{i-1},z_2) < \didist(v_i,y_i) + \didist(v_{i-1},y_i)$, which contradicts the minimality of (\ref{eq:sum-paths}).
		
		\noindent \textbf{Case 3:} Consider an edge between the two paths $P(v_i, y_s)$ and $P(v_j, y_t)$ for some $0 \leq i \neq j \leq k-1$ and $s \neq t$ (where $s \in \{i, i+1\}$ and $t \in \{j, j+1\}$). We assume, without loss of generality, that there is an edge $(z_1, z_2) \in E(\Hdir)$ such that $z_1 \in P(v_i, y_s)$ and $z_2 \in P(v_j, y_t)$. In this case, as there is a path from $v_i$ to $y_t$ (via $z_1, z_2$), we must have $\{s,t\} = \{i,i+1\}$. In addition, as $t \in \{j, j+1\}$, we either have that $s=i, t=i+1$, which implies $t=j$ (as $i \neq j$); or that $s=i+1, t=i$, which implies $t=j+1$ \nolinebreak (as \nolinebreak $i \neq j$).
		
		We first consider the case when $s=i, t=i+1$ and $j=i+1$. 
		In other words, we are considering the situation where there is an edge $(z_1, z_2) \in E(\Hdir)$ from $z_1 \in P(v_i,y_i)$ to $z_2 \in P(v_{i+1},y_{i+1})$.  
		Note that $y_i \in R(z_1)$ and $y_{i+1} \in R(z_2) \subseteq R(z_1)$. By Fact \ref{obs:alpha-acyclic_good_2k-tuple}, either $z_1 = v_i$ or $z_1 \notin \{v_0,\dots,v_{k-1},y_0,\dots,y_{k-1}\}$. Similarly, note that $z_2 \in R(z_1) \cap R(v_{i+1}) \subseteq R(v_i) \cap R(v_{i+1})$, so by Fact \ref{obs:alpha-acyclic_good_2k-tuple} either $z_2 = y_{i+1}$ or $z_2 \notin \{v_0,\dots,v_{k-1},y_0,\dots,y_{k-1}\}$. Furthermore, we cannot have both $z_1 = v_i$ and $z_2 = y_{i+1}$, because then $z_1,z_2$ are both contained in the path $P(v_i,y_{i+1})$, and we have already ruled out the possibility of such a chord.  
		
		Now, replacing $v_i$ with $z_1$ and $y_{i+1}$ with $z_2$ in the tuple $M$, we get a new $2k$-tuple of distinct vertices which is also good. Indeed, there are paths from $z_1$ to $y_i, z_2$, and there is no path from $z_1$ to $y_j$ for all $j \neq i, i+1$, as otherwise we would have a path from $v_i$ to $y_j$ (via $z_1$), which contradicts the assumption. Similarly, there are paths from $z_1, v_{i+1}$ to $z_2$, and there is no path from $v_j$ to $z_2$ for all $j \neq i, i+1$, as otherwise we would have a path from $v_j$ to $y_{i+1}$ (via $z_2$), which contradicts the assumption. 
		Since either $z_1 \neq v_i$ or $z_2 \neq y_{i+1}$, we have 
		$\didist(z_1,y_i) + \didist(z_1,z_2) + \didist(v_{i+1},z_2) < \didist(v_i,y_i) + \didist(v_i,y_{i+1}) + \didist(v_{i+1},y_{i+1})$, 
		which contradicts the minimality of \eqref{eq:sum-paths}. 
		
		We now consider the (symmetrical) case when $s=i+1, t=i$ and $j=i-1$. In other words, we are considering the situation where there is an edge $(z_1, z_2) \in E(\Hdir)$ from $z_1 \in P(v_i,y_{i+1})$ to $z_2 \in P(v_{i-1},y_i)$. Note that $y_{i+1} \in R(z_1)$ and $y_i \in R(z_2) \subseteq R(z_1)$. By Fact \ref{obs:alpha-acyclic_good_2k-tuple}, either $z_1 = v_i$ or $z_1 \notin \{v_0,\dots,v_{k-1},y_0,\dots,y_{k-1}\}$. Similarly, note that $z_2 \in R(z_1) \cap R(v_{i-1}) \subseteq R(v_i) \cap R(v_{i-1})$, so by Fact \ref{obs:alpha-acyclic_good_2k-tuple} either $z_2 = y_i$ or $z_2 \notin \{v_0,\dots,v_{k-1},y_0,\dots,y_{k-1}\}$. Furthermore, we cannot have both $z_1 = v_i$ and $z_2 = y_i$, because then $z_1,z_2$ are both contained in the path $P(v_i,y_i)$, and we have already ruled out the possibility of such a chord.  
		
		Now, replacing $v_i$ with $z_1$ and $y_i$ with $z_2$ in the tuple $M$, we get a new $2k$-tuple of distinct vertices which is also good. Indeed, there are paths from $z_1$ to $z_2, y_{i+1}$, and there is no path from $z_1$ to $y_j$ for all $j \neq i, i+1$, as otherwise we would have a path from $v_i$ to $y_j$ (via $z_1$), which contradicts the assumption. Similarly, there are paths from $z_1, v_{i-1}$ to $z_2$, and there is no path from $v_j$ to $z_2$ for all $j \neq i, i-1$, as otherwise we would have a path from $v_j$ to $y_i$ (via $z_2$), which contradicts the assumption. 
		Since either $z_1 \neq v_i$ or $z_2 \neq y_i$, we have 
		$\didist(z_1,y_{i+1}) + \didist(z_1,z_2) + \didist(v_{i-1},z_2) < \didist(v_i,y_{i+1}) + \didist(v_i,y_i) + \didist(v_{i-1},y_i)$, 
		which contradicts the minimality of \eqref{eq:sum-paths}. 
		
		Cases 1-3 imply that $C$ is an induced cycle. The length of $C$ is evidently at least $2k \geq 6$. This completes the proof of the lemma. 
	\end{proof}
	
	
	To prove Theorem \ref{thm:alpha-acyclic-no-cycles}, we combine Lemma \ref{lem:x_i-in-two-consecutive-T_i-has-cycle} with the following theorem, which gives a structural characterization of $\alpha$-acyclic hypergraphs. 
	The proof of this theorem can be found in \cite{BFMMUY,BFMY}. 
	
	\begin{theorem}[\cite{BFMMUY,BFMY}]\label{thm:HG-alpha-acyclic-S}
		A hypergraph $F$ is $\alpha$-acyclic if and only if for every $k \geq 3$, there is {\em no} \linebreak $S = \{x_0, x_1, \dots, x_{k-1}\} \subseteq V(F)$ such that one of the following conditions holds:
		\begin{enumerate}
			\item For every $0 \leq i \leq k-1$ there exists $e \in E(F)$ such that $e \cap S = \{x_i, x_{i+1}\}$, and there is no $e \in E(F)$ with $|e \cap S| \geq 2$ such that $e \cap S \neq \{x_i, x_{i+1}\}$ for all $0 \leq i \leq k-1$. (All indices are taken modulo $k$.) 
			
			\item For every $0 \leq i \leq k-1$ there exists $e \in E(F)$ such that $e \cap S = S \setminus \{x_i\}$, and there is no $e \in E(F)$ such that $S \subseteq e$. 
		\end{enumerate}
	\end{theorem}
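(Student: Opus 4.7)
My plan is to show that Theorem \ref{thm:HG-alpha-acyclic-S} is equivalent to the classical characterization of $\alpha$-acyclic hypergraphs: $F$ is $\alpha$-acyclic if and only if the $2$-section of $F$ (the graph on $V(F)$ in which two vertices are adjacent whenever they lie in a common hyperedge) is chordal and $F$ is conformal, meaning every clique of the $2$-section is contained in some hyperedge. The first step is to translate between the two formulations. Condition~1 with $k \geq 4$ produces an induced cycle of length $k$ in the $2$-section, since no hyperedge meets $S$ in a pair other than a consecutive pair; conversely, any induced cycle of length $\geq 4$ in the $2$-section yields such an $S$. Condition~2, and condition~1 with $k=3$, each encode a clique in the $2$-section not contained in any hyperedge: taking a minimal such clique $S$ of size $k$, the edges $e_i$ witnessing the cliques $S \setminus \{x_i\}$ satisfy $e_i \cap S = S \setminus \{x_i\}$ (otherwise $e_i \supseteq S$), giving condition~2.

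With the translation in hand, the theorem reduces to proving the equivalence of $\alpha$-acyclicity with chordality of the $2$-section plus conformality. For the ``only if'' direction I would start from a join tree $T$ given by Definition~\ref{def:alpha-acyclic-hyper}. The key structural observation is that for every vertex $x \in V(F)$ the set $T_x \subseteq V(T)$ of tree-nodes labelled by hyperedges containing $x$ is a subtree of $T$: if $e_1, e_2$ both contain $x$, the defining condition $e_1 \cap e_2 \subseteq e$ for any $e$ on the path between them forces $e$ to contain $x$ as well. Subtrees of a tree enjoy the Helly property, so any pairwise intersecting family of $T_x$'s has a common node; this immediately yields conformality. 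For chordality, one argues that an induced cycle of length $\geq 4$ in the $2$-section cannot coexist with such a join tree, by tracing the $T$-path between subtrees corresponding to non-adjacent cycle vertices and extracting a chord from the hyperedge at which the path turns.

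For the ``if'' direction I would construct a join tree inductively using a perfect elimination ordering of the chordal $2$-section. Let $v$ be a simplicial vertex, so its neighbourhood $K$ in the $2$-section is a clique; by conformality, $K \cup \{v\}$ lies in some hyperedge $e_v$. Delete $v$ from the hypergraph, discard any hyperedge that becomes a subset of another, and apply induction to obtain a join tree $T'$ of the resulting hypergraph $F'$. One must check that $F'$ inherits a chordal $2$-section and conformality. To extend $T'$ to a join tree of $F$, attach a new node labelled $e_v$ to any node of $T'$ labelled by a hyperedge containing $K$; such a node exists because the subtrees $T'_x$ for $x \in K$ pairwise intersect and hence, by the Helly property, share a common node.

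The main obstacle is making the reduction step rigorous: one needs to show that chordality and conformality really do survive the removal of a simplicial vertex, and that the attachment step preserves the running-intersection condition of Definition~\ref{def:alpha-acyclic-hyper} for every pair of hyperedges. This amounts to proving that the GYO ear-removal algorithm terminates on a chordal-and-conformal input, which is where the structural content of the hypothesis is actually consumed. Once this is granted, everything else is straightforward bookkeeping built on the Helly property of subtrees of a tree.
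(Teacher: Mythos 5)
The paper does not actually prove Theorem~\ref{thm:HG-alpha-acyclic-S}; it simply cites \cite{BFMMUY,BFMY}, where this statement appears as one of several equivalent characterizations of $\alpha$-acyclicity. Your proposal therefore supplies a proof that the paper omits, and the route you sketch --- translating Conditions~1 and~2 into ``the $2$-section is chordal and $F$ is conformal,'' and then proving the equivalence of that with the join-tree definition --- is indeed the route taken in those references. The translation step is essentially correct. Two small points worth tightening: first, the second clause of Condition~1 forbids not only hyperedges meeting $S$ in a \emph{non-consecutive pair} but also any hyperedge meeting $S$ in \emph{three or more} vertices; both possibilities must be excluded to conclude that $S$ induces exactly $C_k$ in the $2$-section (for $k\ge 4$ neither can occur in a chordless cycle, so the reverse translation is fine). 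Second, the minimality argument producing Condition~2 implicitly uses that a minimal non-conformal clique has size $\ge 3$; this holds because any pair adjacent in the $2$-section is by definition contained in a hyperedge, but it deserves a sentence. For the forward direction, you can skip the ``tracing the $T$-path and extracting a chord'' argument: once you observe that each $T_x$ is a subtree of $T$, the $2$-section is by construction exactly the intersection graph of the family $\{T_x\}_{x\in V(F)}$, and intersection graphs of subtrees of a tree are chordal by the classical Gavril/Buneman/Walter theorem; conformality then follows from the Helly property of subtrees exactly as you say.

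The genuine gap is in the backward direction, and it is slightly larger than what you flag. Definition~\ref{def:alpha-acyclic-hyper} requires that \emph{every} hyperedge of $F$ be a node of the tree, so when you delete the simplicial vertex $v$ and ``discard any hyperedge that becomes a subset of another,'' you cannot simply apply the inductive hypothesis to $F'$ and then attach one new leaf $e_v$: you also have to reinsert every discarded hyperedge of $F$ as an additional leaf of the tree (attached to a node labelled by a dominating hyperedge) and then verify the running-intersection property for these added leaves as well. Moreover, if $F$ has hyperedges $e\subsetneq f$ even before deleting $v$, the induction is cleaner if you first dispose of them by the same attach-as-a-leaf trick, so that $F$ is an antichain. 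Finally, when you attach $e_v$ to a node $e^\star$ of $T'$ with $K\subseteq e^\star$, you still need to check the running-intersection condition for paths passing through $e_v$; this uses that any hyperedge containing $v$ is contained in $K\cup\{v\}=e_v$ (because $v$ is simplicial), so that for any other node $e_2$ one has $e_v\cap e_2\subseteq K\subseteq e^\star$, and the rest of the path stays in $T'$ where the condition is inherited. These are exactly the ``bookkeeping'' steps you anticipated, but they are where the definition's insistence on a bijection between tree nodes and hyperedges actually bites, so they should be written out rather than waved at.
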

	
	%
	\noindent
	We are now ready to prove Theorem \ref{thm:alpha-acyclic-no-cycles}. 
	
	\begin{proof}[Proof of Theorem \ref{thm:alpha-acyclic-no-cycles}] 
		We first prove the ``only if" part of the theorem, or, more precisely, the contrapositive of this statement. 
		Suppose that $H$ contains an induced copy of $C_{\ell}$ for some $\ell \geq 6$, and let $C = (c_0, c_1, \dots, c_{\ell-1})$ be such a cycle.  
		We now orient the edges of $H$ as follows:
		the edges of $C$ are oriented alternatingly along the cycle. More precisely, for an even $0 \leq i \leq \ell-1$, we orient the edge $\{c_i, c_{i+1}\} \in E(H)$ from $c_i$ to $c_{i+1}$, and for an odd $0 \leq i \leq \ell-1$, we orient the edge $\{c_i, c_{i+1}\} \in E(H)$ from $c_{i+1}$ to $c_i$ (with indices taken modulo $\ell$). 
		Now, orient all edges between $C$ and $V(H) \setminus C$ from $C$ to $V(H) \setminus C$, and orient all edges in $V(H) \setminus C$ arbitrarily while avoiding the creation of a directed cycle.
		We denote the resulting orientation of $H$ by $\Hdir$. One can easily verify that $\Hdir$ is acyclic. Indeed, by our construction, the induced subgraphs $\Hdir[C]$ and $\Hdir[V(H) \setminus C]$ do not contain directed cycles, and all the edges between $C$ and $V(H) \setminus C$ go in the same direction. 
		Let 
		$S = \{c_j \, | \, j \text{ is odd}\}$, noting that $|S| \geq \lfloor \ell/2 \rfloor \geq 3$.
		Observe that if $\ell$ is even, then the vertices $c_i$ for even $0 \leq i \leq \ell-1$ are sources, and $R(c_i) \cap S = \{c_{i-1},c_{i+1}\}$ for each such $i$. If $\ell$ is odd, then the sources are the vertices $c_i$ with even $2 \leq i \leq \ell-1$; also, $R(c_i) \cap S = \{ c_{i-1},c_{i+1} \}$ for each even $2 \leq i \leq \ell-3$, and $R(c_{\ell-1}) \cap S = \{c_{\ell-2},c_1\}$, as $R(c_{\ell-1}) \cap V(C) = \{c_{\ell-2},c_0,c_1\}$.
		Observe also that for every source vertex $u$ of $\Hdir$ which is not in $C$, we have that $R(u) \cap S = \emptyset$. So we see that in both cases ($\ell$ even or odd), the situation in Item 1 of Theorem \ref{thm:HG-alpha-acyclic-S} holds with respect to the above-chosen $S$. 
		Hence, by Theorem \ref{thm:HG-alpha-acyclic-S}, $\Hdir$ is not $\alpha$-acyclic. Thus, $H$ has an acyclic orientation which is not $\alpha$-acyclic, as required. 
		
		We now establish the ``if" part of the theorem. Again, we will prove the contrapositive. Suppose that there exists an acyclic orientation $\Hdir$ of $H$ which is not $\alpha$-acyclic. 
		Let $\FHdir$ be the hypergraph as in Definition \ref{def:alpha-acyclic-dir}. Then $\FHdir$ is not $\alpha$-acyclic. 
		Hence, by Theorem \ref{thm:HG-alpha-acyclic-S}, there exists $S = \{x_0, x_1, \dots, x_{k-1}\} \subseteq V(\FHdir)$ with $k \geq 3$ such that (at least) one of the conditions 1-2 in that theorem holds with respect to $S$. Our goal is to show that this implies the existence of an induced $\ell$-cycle in $H$ for some $\ell \geq 6$. 
		
		Assume first that $S$ satisfies Condition 1 of Theorem \ref{thm:HG-alpha-acyclic-S}. For each $0 \leq i \leq k-1$, let $e_i \in E(\FHdir)$ be such that $e_i \cap S = \{x_i, x_{i+1}\}$. By the definition of $\FHdir$, for each $0 \leq i \leq k-1$ there is a source $u_i$ of $\Hdir$ such that $e_i = R(u_i)$.  
		So for every $0 \leq i \leq k-1$ we have that $x_i \in R(u_{i-1}) \cap R(u_i)$ and $x_i \notin R(u_j)$ for all $j \neq i, i-1$. 
		Moreover, $u_0, \dots, u_{k-1},x_0,\dots,x_{k-1}$ are pairwise-distinct because $x_0,\dots,x_{k-1}$ are pairwise-distinct, $u_0,\dots,u_{k-1}$ are pairwise-distinct, and $u_0, \dots, u_{k-1}$ are sources of $\Hdir$ while $x_0,\dots,x_{k-1}$ are not. 
		Therefore, we can apply Lemma \ref{lem:x_i-in-two-consecutive-T_i-has-cycle} and get that $H$ contains an induced copy of $C_{\ell}$ for some $\ell \geq 6$, as required.
		
		Now assume that $S$ satisfies Condition 2 of Theorem \ref{thm:HG-alpha-acyclic-S}. 
		For each $0 \leq i \leq k-1$, let $e_i \in E(\FHdir)$ be such that $e_i \cap S = S \setminus \{x_i\}$. By the definition of $\FHdir$, for each $0 \leq i \leq k-1$ there is a source $u_i$ of $\Hdir$ such that $e_i = R(u_i)$.  
		Considering only $R(u_0), R(u_1), R(u_2)$, and setting $y_0 := x_1, y_1 := x_2, y_2 := x_0$, we have that $y_0 \in R(u_2) \cap R(u_0)$ but $y_0 \notin R(u_1)$, $y_1 \in R(u_0) \cap R(u_1)$ but $y_1 \notin R(u_2)$, and $y_2 \in R(u_1) \cap R(u_2)$ but $y_2 \notin R(u_0)$. Furthermore, just like in the previous case, $u_0,u_1,u_2,y_0,y_1,y_2$ are pairwise-distinct. Therefore, we can again apply Lemma \ref{lem:x_i-in-two-consecutive-T_i-has-cycle} and get that $H$ contains an induced copy of $C_{\ell}$ for some $\ell \geq 6$, as required.	
	\end{proof}

	\section{Hardness Results for $\textsc{HOM-CNT}_{C_k}$: Proof of Lemma \ref{thm:hardness_cycles}}\label{sec:cycles}
	
	The proof of Lemma \ref{thm:hardness_cycles} uses so-called colorful subgraph counts. Let $H$ be a graph on $h$ vertices. 
	For a graph $G$ and a coloring $c : V(G) \rightarrow [h]$ of its vertices, a homomorphism $\varphi: H \rightarrow G$ is called {\em colorful} if every color appears exactly once in the image of $\varphi$, namely, if $\{c(\varphi(v)) : v \in V(H)\} = [h]$. \linebreak(Note that the number of colors is always the number of vertices in $H$.)
	Evidently, a colorful homomorphism must be injective. We denote by $\text{col-inj}(H,G,c)$ the number of colorful homomorphisms from $H$ to $G$, and by $\textsc{col-inj-cnt}_H$ the problem of computing the number of colorful homomorphisms of $H$ in an $h$-vertex-colored input graph. 
	We use the known fact that $\textsc{col-inj-cnt}_H$ reduces to $\textsc{hom-cnt}_H$ (see e.g. \cite{Meeks}). For completeness, we include a proof. 
	\begin{lemma}\label{lem:colorful_hom_count}
		If one can solve $\textsc{hom-cnt}_H$ in time $f(n)$ in $n$-vertex graphs then one can also solve $\textsc{col-inj-cnt}_H$ in time $O(f(n))$ in $n$-vertex graphs. Moreover, the reduction preserves degeneracy.
	\end{lemma}
	\begin{proof}
		Suppose we are given a graph $G$ and a coloring $c : V(G) \rightarrow [h]$, and wish to compute the number of colorful $H$-homomorphisms in $G$. For each set $I \subseteq [h]$, let $V_I$ be the set of vertices of $G$ whose color belongs to $I$. By definition, a homomorphism $\varphi : H \rightarrow G$ is colorful if and only if 
		$c(\text{Image}(\varphi)) = [h]$.
		For every 
		$I \subseteq [h]$, the number of homomorphisms $\varphi : V(H) \rightarrow V(G)$ such that $c(\text{Image}(\varphi)) \subseteq I$ is exactly $\hom(H,G[V_I])$. So by inclusion-exclusion, we have
		$$
		\text{col-inj}(H,G,c) = 
		\sum_{I \subseteq [h]}{(-1)^{|I|} \cdot \hom(H,G[V_I])}.
		$$
		So by computing $\hom(H,G[V_I])$ for each $I \subseteq [h]$, we can recover $\text{col-inj}(H,G,c)$.
		Evidently, each of the graphs $G[V_I]$ has at most $n$ vertices and its degeneracy is at most that of $G$. This completes the proof. 
	\end{proof}
	
	We will also need the fact that detecting a colorful triangle in a $3$-vertex-colored graph is at least as hard as uncolored triangle detection.
	
	\begin{lemma}\label{lem:color_coding}
		If one can detect a colorful triangle in a $3$-vertex-colored graph with $m$ edges in time $f(m)$, then one can also detect a triangle in an uncolored graph with $m$ edges in time $\tilde{O}(f(m))$. 
	\end{lemma}  
	\begin{proof}
		This follows from the color-coding method, see \cite{AYZ_ColorCoding}. Suppose we are given an $n$-vertex graph $G$ and need to decide if $G$ is triangle free. Color the vertices of $G$ randomly with three colors. If $G$ contains a triangle, then the resulting $3$-vertex-colored graph will contain a colorful triangle with constant positive probability. Hence, if there is a $f(n)$-time algorithm for colorful triangle detection, then there is also a $O(f(n))$-time randomized algorithm for (uncolored) triangle detection which has constant success probability. Using color-coding \cite{AYZ_ColorCoding}, this can be derandomized at the cost of a logarithmic factor, thus giving the desired $\tilde{O}(f(n))$-time triangle detection algorithm. 
	\end{proof}
	
	\noindent
	The following lemma constitutes the main step in the proof of Lemma \ref{thm:hardness_cycles}.
	
	\begin{lemma}\label{lem:cycles_main}
		Let $k \geq 4$. For every graph $G$ and coloring $c : V(G) \rightarrow [3]$, one can construct in time $O(|V(G)| + |E(G)|)$ a graph $G'$ with 
		$|V(G')|,|E(G')| = O(|V(G)| + |E(G)|)$ 
		and a coloring $c' : V(G') \rightarrow [k]$, such that $\mathrm{col\text{-}inj}(C_k,G',c') = \frac{k}{3} \cdot \mathrm{col\text{-}inj}(C_3,G,c)$. Moreover, if $k \geq 6$ then $G'$ is $2$-degenerate. 
	\end{lemma}

	\begin{proof}
		Fix numbers 
		$\ell_{1,2},\ell_{1,3},\ell_{2,3} \in \left\{ \lfloor k/3 \rfloor - 1, \lceil k/3 \rceil - 1 \right\}$ such that $\ell_{1,2} + \ell_{1,3} + \ell_{2,3} = k - 3$. 
		Fix a partition $\{4,\dots,k\} = I_{1,2} \cup I_{1,3} \cup I_{2,3}$ where $|I_{i,j}| = \ell_{i,j}$ for each pair $1 \leq i < j \leq 3$.
		We define $G'$ and $c'$ as follows. For each edge $\{x,y\} \in E(G)$, if $c(x) \neq c(y)$, say $i = c(x)$ and $j = c(y)$, $1 \leq i \neq j \leq 3$, then replace the edge $\{x,y\}$ with a path of length $\ell_{i,j}+1$ (between $x$ and $y$), and  color the $\ell_{i,j}$ internal vertices of this path with the $\ell_{i,j}$ colors in $I_{i,j}$. If $c(x) = c(y)$ then simply remove the edge $\{x,y\}$. To complete the definition of $c'$, we set $c'(v) = c(v) \in [3]$ for every $v \in V(G) \subseteq V(G')$. 
		The resulting graph is $G'$ and the resulting coloring is $c'$. Note that the vertices in $V(G)$ are colored with $1,2,3$, while the vertices in $V(G') \setminus V(G)$ are colored with $4,\dots,k$. 
		We have
		$|V(G')| \leq |V(G)| + (\lceil k/3 \rceil - 1) \cdot |E(G)| = O(|V(G)| + |E(G)|)$ and $|E(G')| \leq \lceil k/3 \rceil \cdot |E(G)| = O(|V(G)| + |E(G)|)$. Moreover, if $k \geq 6$ then $G'$ is $2$-degenerate.
		Indeed, observe that the original vertices of $G$ form an independent set in $G'$, because each edge of $G$ is either removed or replaced by a path of length at least
        $\lfloor k/3 \rfloor \geq 2$. 
        Also, all vertices in $V(G') \setminus V(G)$ have degree $2$ in $G'$. Upon removing these vertices, we are left with an empty graph; hence $G'$ is $2$-degenerate, as claimed. 
		
		Let us now consider the colorful $k$-cycles in $G'$. It is easy to see that every cycle $C$ in $G'$ corresponds to a cycle in $G$, in the sense that there is a cycle $x_1,x_2,\dots,x_r,x_1$ in $G$ such that $C$ consists of $x_1,\dots,x_r$ and of the paths which replaced the edges $\{x_1,x_2\},\dots,\{x_{r-1},x_r\},\{x_r,x_1\}$. Now, if $C$ is a colorful $k$-cycle, then it must be the case that $r = 3$ and that the triangle $x_1,x_2,x_3$ is colorful. Indeed, if $C$ is colorful then $x_1,\dots,x_r$ must have distinct colors, which is impossible if $r \geq 4$ (as the vertices of $G$ are colored with $1,2,3$). In the other direction, the definition of $G'$ and $c'$ implies that every colorful triangle in $G$ gives rise to a colorful $k$-cycle in $G'$. It is now easy to see that $\mathrm{col\text{-}inj}(C_k,G',c') = \frac{k}{3} \cdot \mathrm{col\text{-}inj}(C_3,G,c)$, as required  
	\end{proof}
	
	\noindent
	Lemma \ref{thm:hardness_cycles} now follows easily from the above lemmas:
	
	\begin{proof}[Proof of Lemma \ref{thm:hardness_cycles}]
		Let $k \geq 6$, and suppose that $\textsc{hom-cnt}_{C_k}$ can be solved in time $\tilde{O}(n)$ in $2$-degenerate $n$-vertex graphs. By Lemma \ref{lem:colorful_hom_count}, one can also solve $\textsc{col-inj-cnt}_{C_k}$ in time $\tilde{O}(n)$ in $2$-degenerate  $k$-vertex-colored $n$-vertex graphs. One can then use the construction given by Lemma \ref{lem:cycles_main} to obtain an algorithm for colorful triangle detection in ($3$-vertex-colored) $m$-edge graphs, running in time $\tilde{O}(m)$. Using Lemma \ref{lem:color_coding}, one obtains an algorithm for (uncolored) triangle detection, which runs in time $\tilde{O}(m)$ on $m$-edge graphs. The existence of such an algorithm is ruled out by Conjecture \ref{conj:triangle_detection}. 
	\end{proof}

	Using the same argument, one obtains the following statement regarding the hardness of counting $C_k$-homomorphisms in general (i.e., not necessarily degenerate) graphs.
	
	\begin{lemma}\label{lem:cycles_general_graphs}
		Let $k \geq 4$. 
		Assuming Conjecture \ref{conj:triangle_detection}, $\textsc{hom-cnt}_{C_k}$ cannot be solved in time $\tilde{O}(|V(G)| + \nolinebreak |E(G)|)$. 
	\end{lemma}
	
	\section{Solvability of $\textsc{HOM-CNT}_H$ is Hereditary: Proof of Lemma \ref{thm:hom_cnt_hereditary}}\label{sec:hereditary}
	In this section we prove Lemma \ref{thm:hom_cnt_hereditary}. We will actually prove the following more general statement.
	\begin{lemma}\label{thm:hom_cnt_hereditary_general}
		For every graph $H$ there is $k = k(H)$ such that the following holds. For every graph $G$ there are graphs $G_1,\dots,G_k$, computable in time $O(|V(G)| + |E(G)|)$, such that  $|V(G_i)| = O(|V(G)|)$ and $|E(G_i)| = O(|E(G)|)$ for every $i = 1,\dots,k$, and such that
		knowing $\hom(H,G_1),\dots,\hom(H,G_k)$ allows one to compute $\hom(H',G)$ for all induced subgraphs $H'$ of $H$ in time $O(1)$. Furthermore, if $G$ is \linebreak$O(1)$-degenerate, then so are $G_1,\dots,G_k$. 
	\end{lemma}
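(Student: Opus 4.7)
The plan combines a simple graph construction with the homomorphism-linear-combination technique (Lemma~\ref{thm:hom_linear_combination}). For a graph $G$ and an integer $t \geq 1$, let $G \vee K_t$ denote the graph obtained from $G$ by adding $t$ new vertices which form a clique and are each adjacent to every vertex of $G$. Classifying a homomorphism $\varphi : H \to G \vee K_t$ by $S := \varphi^{-1}(V(G))$ — edges of $H$ inside $S$ must map to edges of $G$, edges inside $V(H) \setminus S$ must map to distinct vertices of $K_t$, and edges crossing between $S$ and $V(H) \setminus S$ are automatically satisfied since the added vertices are universal — yields
\begin{equation}\label{eq:hom-plan}
\hom(H, G \vee K_t) \;=\; \sum_{S \subseteq V(H)} \hom(H[S], G) \cdot P_{H[V(H) \setminus S]}(t),
\end{equation}
where $P_F(t) = \hom(F, K_t)$ is the chromatic polynomial of $F$.

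Let $F_1, \ldots, F_m$ be representatives of the isomorphism classes of induced subgraphs of $H$, and set $\beta_j(t) := \sum_{S \subseteq V(H) \, : \, H[S] \cong F_j} P_{H[V(H) \setminus S]}(t)$. Since $P_F$ is monic of degree $v(F)$, the polynomial $\beta_j$ has degree $v(H) - v(F_j)$ and its leading coefficient equals the (positive) number of subsets $S \subseteq V(H)$ with $H[S] \cong F_j$. In particular each $\beta_j$ is a non-zero polynomial, so we may pick a single positive integer $t^* = t^*(H)$ satisfying $\beta_j(t^*) \neq 0$ for every $j$.

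Substituting $G \otimes T$ for $G$ in \eqref{eq:hom-plan} and using the standard identity $\hom(F, G \otimes T) = \hom(F, G) \cdot \hom(F, T)$ gives
\[
\hom\bigl(H,\, (G \otimes T) \vee K_{t^*}\bigr) \;=\; \sum_{j=1}^m \beta_j(t^*) \cdot \hom(F_j, G) \cdot \hom(F_j, T).
\]
By the Erd\H{o}s--Lov\'{a}sz--Spencer lemma (Lemma~\ref{lem:hom_matrix_invertible}) there exist graphs $T_1, \ldots, T_m$, depending only on $H$, such that the matrix $M = (\hom(F_j, T_\ell))_{j,\ell=1}^m$ is invertible. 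Setting $k := m$ and $G_\ell := (G \otimes T_\ell) \vee K_{t^*}$ for $\ell = 1, \ldots, k$, each $G_\ell$ is constructible from $G$ in time $O(|V(G)| + |E(G)|)$, has $|V(G_\ell)|, |E(G_\ell)| = O(|V(G)| + |E(G)|)$, and inherits bounded degeneracy from $G$ (tensor product with a fixed-size graph and join with a fixed-size clique each increase degeneracy by only a constant factor). The values $\hom(H, G_1), \ldots, \hom(H, G_k)$ then give a linear system, with coefficient matrix $M^\top$, in the unknowns $x_j := \beta_j(t^*) \cdot \hom(F_j, G)$; since $M$ and $(\beta_j(t^*))_j$ depend only on $H$, we recover each $\hom(F_j, G)$ in $O(1)$ time by applying a fixed matrix inversion followed by a fixed rescaling, and for any induced subgraph $H'$ of $H$ we identify its isomorphism class $F_j$ and return $\hom(F_j, G)$.

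The main obstacle is that the simpler construction $G \vee K_t$ by itself can produce a rank-deficient system: for example when $H = P_3$, the coefficients of $\hom(K_2, G)$ and $\hom(2K_1, G)$ in \eqref{eq:hom-plan} are $2t$ and $t$ respectively, proportional for every choice of $t$, so no amount of varying $t$ separates these two induced-subgraph counts. The tensor-product step with the probe graphs supplied by Lemma~\ref{lem:hom_matrix_invertible} is exactly what breaks this proportionality and makes the full $m \times m$ coefficient matrix invertible.
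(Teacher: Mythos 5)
Your proof is correct and takes essentially the same approach as the paper: join $G$ (or rather its tensor products with probe graphs) to a fixed clique, expand $\hom(H,\cdot)$ on the resulting graph as a positive linear combination of $\hom(H',G)$ over induced subgraphs $H'$ of $H$, and invoke the Erd\H{o}s--Lov\'{a}sz--Spencer lemma to separate the terms. The only cosmetic differences are that the paper packages the tensor-product/ELS step as a black-box call to Lemma~\ref{thm:hom_linear_combination}, and that it takes $t = |V(H)|$ and observes directly that each coefficient $\hom(H[V(H)\setminus S], K_{|V(H)|})$ is positive (an injective map exists), whereas you choose $t^*$ via the leading-coefficient argument on the polynomials $\beta_j$ --- both are fine.
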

	\noindent
	It is easy to see that Lemma \ref{thm:hom_cnt_hereditary_general} implies Lemma \ref{thm:hom_cnt_hereditary}. 
	
	We now state an important lemma concerned with computing {\em linear combinations of homomorphism counts}. This is implicit in \cite{CDM}.
	As mentioned in the introduction, several of the reductions presented in this paper --- including the proof of Lemma \ref{thm:hom_cnt_hereditary_general} --- rely on it.   
%
	

\begin{lemma}\label{thm:hom_linear_combination}
	Let $H_1,\dots,H_k$ be pairwise non-isomorphic graphs and let $c_1,\dots,c_k$ be non-zero constants. For every graph $G$ there are graphs $G_1,\dots,G_k$, computable in time $O(|V(G)| + |E(G)|)$, such that  $|V(G_i)| = O(|V(G)|)$ and $|E(G_i)| = O(|E(G)|)$ for every $i = 1,\dots,k$, and such that
	knowing $b_j := c_1 \cdot \hom(H_1,G_j) + \dots + c_k \cdot \hom(H_k,G_j)$ for every $j = 1,\dots,k$ allows one to compute $\hom(H_1,G),\dots,\hom(H_k,G)$ in time $O(1)$. Furthermore, if $G$ is $O(1)$-degenerate, then so are $G_1,\dots,G_k$. 
\end{lemma}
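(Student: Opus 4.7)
The plan is to use the classical tensor-product technique combined with the Erd\H{o}s--Lov\'{a}sz--Spencer result on linear independence of homomorphism counts (stated later in the paper as Lemma~\ref{lem:hom_matrix_invertible}). Recall that the tensor product $G \times F$ has vertex set $V(G)\times V(F)$ and edges $\{(u_1,v_1),(u_2,v_2)\}$ whenever $\{u_1,u_2\}\in E(G)$ and $\{v_1,v_2\}\in E(F)$, and that for every graph $H$ one has the multiplicative identity $\hom(H,G\times F) = \hom(H,G)\cdot \hom(H,F)$.

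First I would invoke Lemma~\ref{lem:hom_matrix_invertible} to fix constant-sized graphs $F_1,\dots,F_k$ (depending only on $H_1,\dots,H_k$) such that the $k\times k$ matrix $M$ defined by $M_{j,i} := \hom(H_i,F_j)$ is invertible. Setting $G_j := G\times F_j$, the multiplicative identity yields
$$b_j \;=\; \sum_{i=1}^k c_i \hom(H_i,G_j) \;=\; \sum_{i=1}^k M_{j,i}\cdot\bigl(c_i \hom(H_i,G)\bigr).$$
Viewing $x_i := c_i \hom(H_i,G)$ as unknowns, this is the linear system $Mx = b$. Since $M$ is invertible and depends only on $H_1,\dots,H_k$, the inverse $M^{-1}$ can be regarded as a fixed constant, so from $b_1,\dots,b_k$ we recover $x_1,\dots,x_k$ in $O(1)$ time; dividing by the nonzero $c_i$ yields $\hom(H_i,G)$.

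It remains to check the efficiency and degeneracy claims. Since $|V(F_j)|,|E(F_j)|=O(1)$, we have $|V(G_j)| = |V(G)|\cdot |V(F_j)| = O(|V(G)|)$, $|E(G_j)| = 2|E(G)|\cdot |E(F_j)| = O(|E(G)|)$, and $G_j$ can clearly be constructed from $G$ in time $O(|V(G)|+|E(G)|)$. For degeneracy, suppose $G$ is $\kappa$-degenerate with ordering $u_1,\dots,u_n$. Order the vertices of $G_j$ lexicographically: $(u_i,v)$ precedes $(u_{i'},v')$ whenever $i<i'$, with arbitrary tie-breaking inside each ``layer'' $\{u_i\}\times V(F_j)$. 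A vertex $(u_i,v)$ has at most $\kappa\cdot |V(F_j)|$ neighbours later in this ordering, because its first coordinate must lie among the at most $\kappa$ later neighbours of $u_i$ in $G$, while its second coordinate ranges over $O(1)$ possibilities. Hence $G_j$ is $O(\kappa)$-degenerate, as required.

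The heart of the argument is the classical invertibility result, which is assumed here as a black box; the remaining obstacles are purely bookkeeping --- choosing the right ordering on $G_j$ to keep degeneracy bounded, and verifying that $M$ can be precomputed once and for all so that solving $Mx=b$ is an $O(1)$ operation given the input $b$.
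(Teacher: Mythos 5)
Your proposal is correct and follows essentially the same route as the paper's proof: take tensor products $G_j = G \times F_j$ with the Erd\H{o}s--Lov\'{a}sz--Spencer graphs from Lemma~\ref{lem:hom_matrix_invertible}, use multiplicativity of $\hom(H,\cdot)$ over tensor products, and invert the resulting constant-size linear system. The only cosmetic differences are that you absorb the coefficients $c_i$ into the unknowns rather than into the matrix (equivalent, since column-scaling by nonzero constants preserves invertibility), and you argue degeneracy by explicitly exhibiting a lexicographic ordering rather than via the every-subgraph-has-a-low-degree-vertex characterization used in the paper's Observation~\ref{obs:tensor_product_degrees}.
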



	\noindent
	For completeness, we give a proof of Lemma \ref{thm:hom_linear_combination} in the appendix.
	
	In the proof of Lemma \ref{thm:hom_cnt_hereditary_general}, it will be convenient to consider the empty graph $K_0$, and to define 
	$\hom(K_0,G) = 1$ for every graph $G$.

	\begin{proof}[Proof of Lemma \ref{thm:hom_cnt_hereditary_general}]
		Let $H$ be a graph on $h$ vertices.
		For a graph $F$, we denote by $F + K_h$ the graph obtained from $F$ by adding a clique of size $h$ and connecting it to $V(F)$ with a complete bipartite graph. 
		We start by showing that for every graph $F$, it holds that
		\begin{equation}\label{eq:hom_cnt_graph+clique}
		\hom(H,F + K_h) = 
		\sum_{U \subseteq V(H)} \hom(H[U], F) \cdot \hom(H[V(H) \setminus U], K_h).
		\end{equation}
		To see that \eqref{eq:hom_cnt_graph+clique} holds, let us assign to each function $\varphi : V(H) \rightarrow V(F + K_h)$ the set 
		\linebreak$U = U(\varphi) := \varphi^{-1}(V(F)) \subseteq V(H)$. Our definition of the graph $F + K_h$ guarantees that a function $\varphi : V(H) \rightarrow V(F + K_h)$ is a homomorphism if and only if $\varphi|_{U(\varphi)}$ is a homomorphism from $H[U]$ to $F$ and $\varphi|_{V(H) \setminus U(\varphi)}$ is a homomorphism from $H[V(H) \setminus U(\varphi)]$ to $K_h$. By summing over all possible values of $U$, we get \eqref{eq:hom_cnt_graph+clique}. Note that $U(\varphi)$ may be empty (in case $\text{Im}(\varphi) \subseteq K_h$); in this case $\hom(H[U],F) = 1$ (as $H[U]$ is the empty graph).
		
		Let $H_1, H_2, \dots, H_k$ be an enumeration of all induced subgraphs of $H$ (including the empty one), up to isomorphism (that is, $H_1,\dots,H_k$ are pairwise non-isomorphic). For each $1 \leq i \leq k$, set
		$$
		c_i := \sum_{\substack{U \subseteq V(H): \\ H[U] \cong H_i}} \hom(H[V(H) \setminus U], K_h).
		$$
		Note that $c_1,\dots,c_k$ depend only on $H$.
		With this notation, we can rewrite \eqref{eq:hom_cnt_graph+clique} as follows:
		\begin{equation}\label{eq:hom_cnt_graph+clique_1}
		\hom(H,F + K_h) = 
		\sum_{i = 1}^{k}{c_i \cdot \hom(H_i,F)}.
		\end{equation}
		Note that for each $1 \leq i \leq k$ we have $c_i > 0$, since there is some $U \subseteq V(H)$ for which $H[U] \cong H_i$ (by our choice of $H_1,\dots,H_k$), and for this $U$ it clearly holds that $\hom(H[V(H) \setminus U], K_h) > 0$. 
		
		Now let $G$ be a graph. 
		Apply Lemma \ref{thm:hom_linear_combination} (to the graph $G$), and let  $G'_1,\dots,G'_k$ be the graphs given by that lemma. For each $1 \leq i \leq k$, set $G_i := G'_i + K_h$, noting that $|V(G_i)| = |V(G'_i)| + h = \linebreak O(|V(G)| + |E(G)|)$ and $|E(G_i)| = |E(G'_i)| + |V(G'_i)| \cdot h + \binom{h}{2} = O(|V(G)| + |E(G)|)$, where in both cases the last equality is guaranteed by Lemma \ref{thm:hom_linear_combination}. Furthermore, if $G$ is $O(1)$-degenerate then so is $G'_i$ for every $1 \leq i \leq k$ (by Lemma \ref{thm:hom_linear_combination}), and hence so is $G_i$ for every $1 \leq i \leq k$ (indeed, if a graph $F$ is $\kappa$-degenerate then $F + K_h$ is $(\kappa+h)$-degenerate). Crucially, observe that knowing $\hom(H,G_1),\dots,\hom(H,G_k)$ allows one to compute $c_1 \cdot \hom(H_1,G'_j) + \dots + c_k \cdot \hom(H_k,G'_j)$ for every $1 \leq j \leq k$ (by \eqref{eq:hom_cnt_graph+clique_1}), which in turn allows one to compute $\hom(H_1,G),\dots,\hom(H_k,G)$ in time $O(1)$ (by our choice of $G'_1,\dots,G'_k$ via Lemma \ref{thm:hom_linear_combination}). This completes the proof, as every induced subgraph of $H$ is isomorphic to one of the graphs $H_1,\dots,H_k$.
	\end{proof}
%
	
	\section{Counting Copies: Proof of Lemma \ref{thm:inj_computation}}\label{sec:inj}
	We prove the following more general lemma, from which Lemma \ref{thm:inj_computation} will easily follow.
	\begin{lemma}\label{thm:inj_computation_general}
		For every graph $H$ there is $k = k(H)$ such that the following holds. For every graph $G$ there are graphs $G_1,\dots,G_k$, computable in time $O(|V(G)| + |E(G)|)$, such that $|V(G_i)| = O(|V(G)|)$ and $|E(G_i)| = O(|E(G)|)$ for every $i = 1,\dots,k$, and such that
		knowing $\inj(H,G_1),\dots,\inj(H,G_k)$ allows one to compute $\hom(H/P,G)$ for all partitions $P$ of $V(H)$ in time $O(1)$. Furthermore, if $G$ is $O(1)$-degenerate, then so are $G_1,\dots,G_k$. 
	\end{lemma}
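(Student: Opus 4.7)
The plan is to express $\inj(H, \cdot)$ as an explicit linear combination, with non-zero coefficients, of the homomorphism counts $\hom(H_i, \cdot)$, where $H_1, \dots, H_k$ enumerate the pairwise non-isomorphic loopless quotient graphs of $H$, and then invoke Lemma \ref{thm:hom_linear_combination} to solve for each individual $\hom(H_i, G)$. Starting from the M\"{o}bius inversion \eqref{eq:hom-inj_inverted} and using that $\hom(H/P, F) = 0$ whenever $H/P$ has a loop and $F$ is simple (the case where some part of $P$ fails to be independent in $H$), we will obtain, for every simple graph $F$,
\[
\inj(H, F) \;=\; \sum_{i=1}^{k} c_i \cdot \hom(H_i, F), \qquad c_i \;:=\; \sum_{P \,\in\, \mathcal{P}(H) \,:\, H/P \,\cong\, H_i} \mu_{\text{part}}(P),
\]
where the sum defining $c_i$ ranges over the (loopless) partitions $P$ of $V(H)$ whose quotient is isomorphic to $H_i$.

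The main technical point, and the step I expect to be the most delicate, will be verifying that each $c_i$ is non-zero. A priori, distinct partitions $P$ contributing to $c_i$ could cancel. The saving observation is the Frucht--Rota--Sch\"{u}tzenberger formula \eqref{eq:Frucht_Rota_Schutzenberger}, giving $\mu_{\text{part}}(P) = (-1)^{v(H) - |P|} \prod_{U \in P}(|U|-1)!$. All partitions $P$ with $H/P \cong H_i$ satisfy $|P| = v(H_i)$, so every term in the sum defining $c_i$ shares the common sign $(-1)^{v(H) - v(H_i)}$, while the magnitudes $\prod_{U \in P}(|U|-1)!$ are strictly positive integers. Since at least one such $P$ exists (by definition of $H_i$ as a quotient of $H$), no cancellation is possible and $c_i \neq 0$.

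Once the identity and the non-vanishing of the $c_i$'s are in hand, we feed the pairwise non-isomorphic graphs $H_1, \dots, H_k$ and the non-zero constants $c_1, \dots, c_k$ into Lemma \ref{thm:hom_linear_combination}. Applied to the input graph $G$, this yields graphs $G_1, \dots, G_k$, computable in time $O(|V(G)| + |E(G)|)$, with $|V(G_i)| = O(|V(G)|)$ and $|E(G_i)| = O(|E(G)|)$, and $O(1)$-degenerate whenever $G$ is. Knowing $\inj(H, G_j)$ for each $j$ is exactly knowing the linear-combination value $\sum_i c_i \cdot \hom(H_i, G_j)$ that Lemma \ref{thm:hom_linear_combination} requires, so in constant additional time we recover $\hom(H_i, G)$ for every $i$. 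Finally, for any partition $P$ of $V(H)$, either $H/P$ has a loop and $\hom(H/P, G) = 0$ because $G$ is simple, or $H/P$ is loopless and isomorphic to a unique $H_i$, and then $\hom(H/P, G) = \hom(H_i, G)$ has already been computed. This produces $\hom(H/P, G)$ for every $P$ in time $O(1)$, completing the proof.
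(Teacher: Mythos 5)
Your proposal is correct and follows essentially the same route as the paper: rewrite the M\"{o}bius-inverted identity \eqref{eq:hom-inj_inverted} by grouping partitions with isomorphic quotients, observe via the Frucht--Rota--Sch\"{u}tzenberger formula \eqref{eq:Frucht_Rota_Schutzenberger} that each aggregate coefficient $c_i$ is a sum of like-signed nonzero terms and hence nonzero, and then invoke Lemma \ref{thm:hom_linear_combination}. Your explicit handling of looped quotients is a minor clarification that the paper addresses in a footnote but otherwise the argument is identical.
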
 

	In the proof, we will need the value (or rather, the sign) of the M\"{o}bius function of the partition poset (which plays a role in \eqref{eq:hom-inj_inverted}). This M\"{o}bius function is given by the Frucht--Rota-–Sch\"{u}tzenberger formula (see, e.g., \cite[Chapter 25]{vLW}): for a partition $P$ of the vertex-set of $H$, we have:
	\begin{equation}\label{eq:Frucht_Rota_Schutzenberger}
	\mu_{\text{part}}(P) = (-1)^{|V(H)| - |P|} \cdot \prod_{U \in P}{(|U|-1)!}.
	\end{equation}
	\begin{proof}
		[Proof of Lemma \ref{thm:inj_computation_general}]
		The lemma follows easily by combining \eqref{eq:hom-inj_inverted} and Lemma \ref{thm:hom_linear_combination}. 
		Let $H_1,\dots,H_k$ be an enumeration of all quotient graphs of $H$, up to isomorphism. That is, $H_1,\dots,H_k$ are pairwise non-isomorphic and $\{H_1,\dots,H_k\} = 
		\{H/P : P \in \mathcal{P}(H)\},$
		where, as before, $\mathcal{P}(H)$ denotes the set of all partitions of $V(H)$.
		Let $G$ be a graph. 
		By using \eqref{eq:hom-inj_inverted} and ``combining like terms", we get that
		\begin{equation}\label{eq:hom-inj_tensor-product}
		\begin{split}
		\inj(H,G) &= 
		\sum_{P \in \mathcal{P}(H)}{\mu_{\text{part}}(P) \cdot \hom(H/P,G)} = 
		\sum_{i=1}^{k}{
			\left( \sum_{\substack{P \in \mathcal{P}(H) : \\ H/P \cong H_i}}{\mu_{\text{part}}(P)} \right) \cdot \hom(H_i,G)}.
		\end{split}
		\end{equation}
		From \eqref{eq:Frucht_Rota_Schutzenberger} we know that $\mu_{\text{part}}(P) \neq 0$ for all $P \in \mathcal{P}(H)$, and that the sign of $\mu_{\text{part}}(P)$ depends only on the number of parts in $P$. In particular, if $P,Q \in \mathcal{P}(H)$ are such that $H/P \cong H/Q$, then $\mu_{\text{part}}(P)$ and $\mu_{\text{part}}(Q)$ have the same sign. Now, setting 
		$$
		c_i := \sum_{\substack{P \in \mathcal{P}(H) : \\ H/P \cong H_i}}{\mu_{\text{part}}(P)},
		$$
		we see that $c_i$ is non-zero, since the summands in the above sum cannot cancel each other. With this notation, \eqref{eq:hom-inj_tensor-product} becomes 
		\begin{equation}\label{eq:hom-inj_tensor-product_1}
		\inj(H,G) = \sum_{i = 1}^{k}{c_i \cdot \hom(H_i,G)}.
		\end{equation}
		Now the lemma immediately follows from \eqref{eq:hom-inj_tensor-product_1} and Lemma \ref{thm:hom_linear_combination}, as every quotient graph of $H$ is isomorphic to one of the graphs $H_1,\dots,H_k$. 
	\end{proof}
	\begin{proof}[Proof of Lemma \ref{thm:inj_computation}]
		The ``if" part of the lemma follows from \eqref{eq:hom-inj_inverted}, and the ``only-if" part of the lemma follows from Lemma \ref{thm:inj_computation_general}. 
	\end{proof}

	\section{Counting Induced Copies: Proof of Theorem \ref{thm:ind_corollary} and Lemma \ref{thm:ind_computation}}\label{sec:ind}
	Before establishing Lemma \ref{thm:ind_computation}, we again prove a more general statement.
	\begin{lemma}\label{thm:ind_computation_general}
		For every graph $H$ there is $k = k(H)$ such that the following holds. For every graph $G$ there are graphs $G_1,\dots,G_k$, computable in time $O(|V(G)| + |E(G)|)$, such that  $|V(G_i)| = O(|V(G)|)$ and $|E(G_i)| = O(|E(G)|)$ for every $i = 1,\dots,k$, and such that the following holds:
		\begin{enumerate}
			\item Knowing 
			$\ind(H,G_i)$ for every $1 \leq i \leq k$ 
			allows one to compute $\hom(H',G)$ for all supergraphs $H'$ of $H$ in time $O(1)$.
			\item Knowing $\hom(H',G_i)$ for every supergraph $H'$ of $H$ and every $1 \leq i \leq k$, allows one to compute $\inj(H',G)$ for all supergraphs $H'$ of $H$ in time $O(1)$.
		\end{enumerate}
		Furthermore, if $G$ is $O(1)$-degenerate, then so are $G_1,\dots,G_k$.
	\end{lemma}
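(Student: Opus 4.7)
The plan is to combine the two linear-combination tools already developed, Lemma \ref{thm:hom_linear_combination} and Lemma \ref{thm:hom_cnt_hereditary_general}, with the M\"obius inversions \eqref{eq:inj-ind_inverted} and \eqref{eq:hom-inj_inverted}.

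For part 1, substituting \eqref{eq:hom-inj_inverted} into \eqref{eq:inj-ind_inverted} gives
\begin{equation*}
\ind(H,G)\;=\;\sum_{E,P} (-1)^{|E|}\mu_{\text{part}}(P)\,\hom((H\cup E)/P,G),
\end{equation*}
where $E$ ranges over subsets of $\binom{V(H)}{2}\setminus E(H)$ and $P$ over partitions of $V(H)$. After grouping terms by the isomorphism class of $(H\cup E)/P$, this reads $\ind(H,\cdot)=\sum_j c_j\hom(F_j,\cdot)$ for pairwise non-isomorphic $F_j$. The key check is that for every supergraph $H'$ of $H$ the coefficient $c_j$ attached to (the class of) $H'$ is nonzero: since $|V((H\cup E)/P)|=|P|\le v(H)$, only pairs with $P$ the singleton partition can contribute to an iso class of order $v(H)$, and for such pairs $\mu_{\text{part}}(P)=1$, leaving $c_j=(-1)^{e(H')-e(H)}\cdot|\{E:H\cup E\cong H'\}|\ne 0$. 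Discarding the terms with $c_j=0$ and invoking Lemma \ref{thm:hom_linear_combination} produces a constant number of gadget graphs $G^{(1)}_1,\dots,G^{(1)}_{r_1}$ from which the values $\ind(H,G^{(1)}_\ell)$ determine each $\hom(F_j,G)$, and in particular $\hom(H',G)$ for every supergraph $H'$ of $H$.

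For part 2, \eqref{eq:hom-inj_inverted} reduces the task to computing $\hom(H'/P,G)$ for every supergraph $H'$ of $H$ and every partition $P$ of $V(H')$ (one may restrict to $P$ whose parts are $H'$-independent, as otherwise $H'/P$ has a loop and the count vanishes). The structural lemma I will prove is that any such quotient is an induced subgraph of some supergraph of $H$: given $P=\{U_1,\dots,U_k\}$, pick representatives $u_i\in U_i$, set $U=\{u_1,\dots,u_k\}\subseteq V(H)$, and let $H''$ be the graph on $V(H)$ with edge set
\begin{equation*}
E(H'')=E(H)\cup\{\{u_i,u_j\}:\{U_i,U_j\}\in E(H'/P)\}.
\end{equation*}
Then $H''\supseteq H$, and the bijection $U_i\leftrightarrow u_i$ is an isomorphism $H'/P\to H''[U]$; the only nontrivial verification is $E(H[U])\subseteq\{\{u_i,u_j\}:\{U_i,U_j\}\in E(H'/P)\}$, which holds because any $\{u_i,u_j\}\in E(H)\subseteq E(H')$ projects to an edge of $H'/P$. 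Since there are only finitely many supergraphs of $H$, applying Lemma \ref{thm:hom_cnt_hereditary_general} to each supergraph $H''$ yields a constant-sized family of gadgets $G^{(2,H'')}_\ell$; given the values $\hom(H'',G^{(2,H'')}_\ell)$, the lemma returns $\hom(F,G)$ for every induced subgraph $F$ of $H''$, hence $\hom(H'/P,G)$ for the relevant $(H',P)$, and \eqref{eq:hom-inj_inverted} then yields $\inj(H',G)$ for every supergraph $H'$ of $H$.

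The final family $G_1,\dots,G_k$ is the concatenation of the part-1 gadgets and the part-2 gadgets over all supergraphs $H''$ of $H$; linear-time constructibility, the linear bounds on $|V(G_i)|,|E(G_i)|$, and the preservation of $O(1)$-degeneracy all follow from the corresponding guarantees in Lemmas \ref{thm:hom_linear_combination} and \ref{thm:hom_cnt_hereditary_general}. The two substantive steps---neither deep---are the coefficient-nonvanishing check in part 1 and the embedding of loopless quotients into induced subgraphs of supergraphs in part 2; both are essentially bookkeeping statements about how quotients of supergraphs of $H$ sit inside the partition and supergraph posets.
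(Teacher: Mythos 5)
Your proof is correct and mirrors the paper's argument essentially step for step: Item 1 is handled by substituting \eqref{eq:hom-inj_inverted} into \eqref{eq:inj-ind_inverted}, grouping terms by isomorphism class, observing that the coefficient of any supergraph $H'$ of $H$ survives because only the singleton partition can contribute (all summands then share the sign $(-1)^{e(H')-e(H)}$), and invoking Lemma \ref{thm:hom_linear_combination}; Item 2 is handled by embedding each loopless quotient $H'/P$ as an induced subgraph of a supergraph $H''$ of $H$ (your representative-based construction and the paper's "add all edges between $U$ and $V$" construction are equivalent for this purpose) and then invoking Lemma \ref{thm:hom_cnt_hereditary_general}. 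The only cosmetic difference is that you make the coefficient computation and the choice of representatives fully explicit, which the paper leaves as "easy to see."
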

	\begin{proof}
		We will show that there are $G_1,\dots,G_k$ which satisfy the assertion of each of the items 1-2 separately. One can then take the union of these two families of graphs to obtain the desired graphs $G_1,\dots,G_k$ for which both items hold. 
		
		Starting with Item 2, we begin by proving the following preliminary claim: 
		for every supergraph $H'$ of $H$ and every partition $P$ of $V(H)$, there is a supergraph $H''$ of $H$ such that $H'/P$ is an induced subgraph of $H''$. Indeed, fixing $H'$ and $P$ as above, we define $H''$ as follows: for each $\{U,V\} \in E(H'/P)$ (so $U,V \in P$), add to $H$ all edges between $U$ and $V$; the resulting graph is $H''$. It is easy to see that $H'/P$ is indeed an induced subgraph of $H''$, as required. 
		
		Now let $G$ be a graph.  
		By combining the above claim with Lemma \ref{thm:hom_cnt_hereditary_general} (which we apply to all supergraphs $H'$ of $H$ at once), we see that there are graphs $G_1,\dots,G_k$ with $|V(G_i)| = O(|V(G)|)$ and $|E(G_i)| = O(|E(G)|)$ for every $i = 1,\dots,k$, such that knowing $\hom(H',G_i)$ for every supergraph $H'$ of $H$ and every $1 \leq i \leq k$, allows one to compute $\hom(H'/P,G)$ for all supergraphs $H'$ of $H$ and all partitions $P$ of $V(H)$ in time $O(1)$. 
		With this information at hand, one can use \eqref{eq:hom-inj} to compute $\inj(H',G)$ for all supergraphs $H'$ of $H$ in time $O(1)$, as required. 
	
		We now move on to establish Item 1. For convenience, we denote by $\mathcal{E}$ the set of all subsets of  $\binom{V(H)}{2} \setminus E(H)$, and by $\mathcal{P}$ the set of all partitions of $V(H)$. 
		We start by observing that for every graph \nolinebreak $F$, 
		\begin{equation}\label{eq:hom-ind_tensor-product}
		\begin{split}
		\ind(H,F) &= 
		\sum_{E \in \mathcal{E}}{(-1)^{|E|} \cdot \inj(H \cup E,F)} 
		\\ &=
		\sum_{E \in \mathcal{E}} \;\sum_{P \in \mathcal{P}}
		{
			(-1)^{|E|} \cdot \mu_{\text{part}}(P) \cdot \hom((H \cup E)/P,F)
		},
		\end{split}
		\end{equation}
		where the first equality uses \eqref{eq:inj-ind_inverted} and the second equality uses \eqref{eq:hom-inj_inverted}.
		For an (unlabeled) graph $H'$, put
		\begin{equation}\label{eq:hom-ind_coeff}
		c_{H'} := \sum_{\substack{E \in \mathcal{E}, \; P \in \mathcal{P}: \\ (H \cup E)/P \cong H'}}{(-1)^{|E|} \cdot \mu_{\text{part}}(P)}.
		\end{equation}
		With this notation, \eqref{eq:hom-ind_tensor-product} becomes 
		\begin{equation*}
		\ind(H,F) = \sum_{H' : \; c_{H'} \neq 0}{c_{H'} \cdot \hom(H',F)}.
		\end{equation*}	
		Let $H_1,\dots,H_k$ be an enumeration of all graphs $H'$ such that $c_{H'} \neq 0$, up to isomorphism. In other words, $H_1,\dots,H_k$ are pairwise non-isomorphic and 
		$\{H_1,\dots,H_k\} = 
		\{H' : c_{H'} \neq 0\}$. For each $1 \leq i \leq k$, we put $c_i := c_{H_i}$. 
		Finally, we can write
		\begin{equation}\label{eq:hom-ind_tensor-product_1}
		\ind(H,F) = \sum_{i = 1}^{k}{c_i \cdot \hom(H_i,F)}.
		\end{equation}	
		Crucially, observe that every supergraph of $H$ is isomorphic to one of the graphs $H_1,\dots,H_k$. To see this, let $H'$ be a (representative of the isomorphism class of a) supergraph of $H$. Clearly, if $E \in \mathcal{E}$ and $P \in \mathcal{P}$ are such that $(H \cup E)/P \cong H'$, then $P$ is the partition of $V(H)$ into singletons, and $|E| = |E(H')| - |E(H)|$. Hence, all summands on the right-hand side of \eqref{eq:hom-ind_coeff} have the same sign, implying that $c_{H'} \neq 0$. This in turn implies that $H'$ is isomorphic to one of the graphs $H_1,\dots,H_k$, as required. 
		Now Item 1 of the lemma immediately follows from \eqref{eq:hom-ind_tensor-product_1} and Lemma \ref{thm:hom_linear_combination}.  
	\end{proof}
	\begin{proof}[Proof of Lemma \ref{thm:ind_computation}]
	The implication $2 \Rightarrow 1$ follows from \eqref{eq:inj-ind_inverted}, the implication $3 \Rightarrow 2$ follows from Item 2 of Lemma \ref{thm:ind_computation_general}, and the implication $1 \Rightarrow 3$ follows from Item 1 of Lemma \ref{thm:ind_computation_general}. 
	\end{proof}

	\begin{proof}[Proof of Theorem \ref{thm:ind_corollary}]
		By putting together Lemma \ref{thm:ind_computation} and Corollary \ref{cor:charac}, we see that $\textsc{ind-cnt}_H$ is easy if and only if every supergraph of $H$ is induced $C_k$-free for every $k \geq 6$ (with the ``only if" part requiring Conjecture \ref{conj:triangle_detection}). 
		Thus, to prove the theorem, it remains to observe that the following are equivalent:
		\begin{enumerate}
			\item Every supergraph of $H$ is induced $C_k$-free for every $k \geq 6$.
			\item $H$ does not contain as an induced subgraph any (not necessarily induced) spanning subgraph of $C_6$.
		\end{enumerate}
		The implication $1 \Rightarrow 2$ is immediate. For the reverse implication, observe that if some supergraph $H'$ of $H$ contains an induced copy of $C_k$ (for some $k \geq 6$) then by adding an appropriate chord to this $C_k$, we obtain a supergraph $H''$ of $H$ which contains an induced copy of $C_6$. But then $H$ contains an induced copy of some spanning subgraph of $C_6$. This shows that $2 \Rightarrow 1$. 
	\end{proof}

	\section{Proof of Theorem \ref{thm:charac_general_graphs}}\label{sec:general_graphs}

	The ``if" part of Theorem \ref{thm:charac_general_graphs} follows from the following proposition. 
	
	\begin{proposition}\label{prop:hom_count_forest}
		Let $H$ be a forest. Then for every graph $G$, $\hom(H,G)$ can be computed in time $\tilde{O}(|V(G)| + |E(G)|)$. 
	\end{proposition}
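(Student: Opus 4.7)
The plan is to use a standard bottom-up dynamic programming on the tree structure of $H$. First, if $H$ has connected components $H_1, \dots, H_c$ (each a tree), then $\hom(H,G) = \prod_{i=1}^{c} \hom(H_i, G)$, so it suffices to handle the case when $H$ is a tree; from now on, assume this is the case and fix an arbitrary root $r \in V(H)$.

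For each vertex $u \in V(H)$, let $H_u$ denote the subtree of $H$ rooted at $u$, and for each $v \in V(G)$ define $f_u(v)$ to be the number of homomorphisms $\varphi : V(H_u) \to V(G)$ sending $u$ to $v$. The base case is that $f_u(v) = 1$ for all $v \in V(G)$ whenever $u$ is a leaf. Otherwise one has the recursion
\[
f_u(v) = \prod_{u' \text{ child of } u} \left( \sum_{v' \in N_G(v)} f_{u'}(v') \right),
\]
which follows because a homomorphism from $H_u$ to $G$ with $\varphi(u) = v$ is determined, independently for each child $u'$ of $u$, by a choice of $v' \in N_G(v)$ to which $u'$ is mapped together with a homomorphism from $H_{u'}$ to $G$ sending $u'$ to $v'$. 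Once all values $f_u(v)$ are known, the desired quantity is $\hom(H,G) = \sum_{v \in V(G)} f_r(v)$.

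To make this run in time $O(|V(G)| + |E(G)|)$, I would process the vertices of $H$ in the reverse of a BFS order from the root, so that children are handled before parents. For a vertex $u$ with children $u'_1, \dots, u'_t$, the arrays $g_i(v) := \sum_{v' \in N_G(v)} f_{u'_i}(v')$ can each be computed for all $v \in V(G)$ in time $O(|V(G)| + |E(G)|)$ with a single pass over the edges of $G$ (each edge $\{v, v'\}$ contributes $f_{u'_i}(v')$ to $g_i(v)$ and $f_{u'_i}(v)$ to $g_i(v')$), after which $f_u(v) = \prod_{i=1}^{t} g_i(v)$ is computed in time $O(t)$ per vertex of $G$. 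Summing over $u \in V(H)$ and using that $\sum_{u} \deg_H(u) = 2(v(H) - 1)$ is a constant depending only on $H$, we obtain a total running time of $O(|V(G)| + |E(G)|)$. Since all $f_u$ can be stored in arrays indexed by vertex, no hashing (and hence no randomness) is actually needed.

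There is no substantial obstacle here: this is the classical tree DP for counting homomorphisms from a tree into an arbitrary graph, and its linear-time analysis on arbitrary input graphs $G$ relies only on the fact that the work done while processing each vertex of $H$ is a single pass over $E(G)$ plus $O(|V(G)|)$ bookkeeping, together with the trivial fact that $v(H)$ is a constant.
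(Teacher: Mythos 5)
Your proof is correct and takes essentially the same approach as the paper: reduce to the tree case via multiplicativity over components, then run a bottom-up dynamic program over the vertices of $H$ maintaining, for each $v\in V(G)$, the number of homomorphisms of the current rooted subtree that map its root to $v$. The only (minor, and valid) difference is that you observe arrays indexed by $V(G)$ suffice, so the algorithm is in fact deterministic worst-case linear time; the paper invokes perfect hashing and hence only claims expected linear time.
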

	
	\begin{proof}
		First, observe that if $H$ is a disconnected graph with connected components $H_1,\dots,H_k$, then $\hom(H,G) = \prod_{i=1}^{k}\hom(H_i,G)$ for every graph $G$. Thus, to prove the proposition, it suffices to consider the case that $H$ is a tree. 
		
		Suppose then that $H$ is a tree. To compute $\hom(H,G)$, we use dynamic programming to compute, for each subtree $H'$ of $H$, $v \in V(H')$ and $x \in V(G)$, the number $N(H',v,x)$ of homomorphisms $\varphi$ from $H'$ to $G$ such that $\varphi(v) = x$. To compute $N(H',v,x)$ for given $H'$ and $v$ (simultaneously for all $x \in V(G)$), we consider two cases according to whether or not $v$ is a leaf of $H'$. Suppose first that $v$ is a leaf of $H'$, let $u$ be the only neighbour of $v$ in $H'$, and set $H'' := H' - v$. Now, for each $x \in V(G)$, compute $N(H',v,x) = \sum_{y : \{x,y\} \in E(G)}{N(H'',u,y)}$ (where the sum is over all neighbours $y$ of $x$). Suppose now that $v$ is not a leaf, let
		$C_1,\dots,C_k$ be the connected components of the forest $H' - v$, and set $H'_i := H'[C_i \cup \{v\}]$. Now, for each $x \in V(G)$, compute $N(H',v,x) = \prod_{i=1}^{k}{N(H'_i,v,x)}$. 
		Thus, in both cases, one can compute $N(H',v,x)$ for all $x \in V(G)$ by relying on counts for smaller trees which have already been computed and stored. Retrieving or updating stored values can be done in time $\tilde{O}(1)$.
	\end{proof}
	\begin{proof}[Proof of Theorem \ref{thm:charac_general_graphs}]
		The ``if" part of the theorem follows from Proposition \ref{prop:hom_count_forest}. The ``only if" part follows from the combination of Lemmas \ref{lem:cycles_general_graphs} and \ref{thm:hom_cnt_hereditary_general} (as any non-forest evidently contains an induced \nolinebreak cycle). 
	\end{proof}
	
	\paragraph{Acknowledgments} 
	We would like to thank an anonymous referee for kindly pointing out that our methods can be used to study homomorphism-counting in general graphs, thus leading to Theorem \ref{thm:charac_general_graphs}. We are also grateful to Marc Roth and Raphael Yuster for informing us of the reduction currently used to prove Lemma \ref{thm:hardness_cycles} (a previous version of the paper used a different, more complicated reduction).

\appendix
\section{Proof of Lemma \ref{thm:hom_linear_combination}}
We start by recalling the definition of the tensor product of graphs. The tensor product of graphs $G_1$ and $G_2$, denoted $G_1 \times G_2$ has vertex set 
$V(G_1 \times G_2) = V(G_1) \times V(G_2)$ and edge-set 
$$E(G_1 \times G_2) = 
\{\{(x_1,x_2), (y_1,y_2)\} \, : \, \{x_1,y_1\} \in E(G_1) \text{ and } \{x_2,y_2\} \in E(G_2)\}.$$
A key property of the tensor product is that the parameter $\hom(H,\cdot)$ is multiplicative with respect to it (for any graph $H$). That is, for every pair of graphs $G_1,G_2$, it holds that
\begin{equation}\label{eq:hom-multiplicative-for-tensor}
\hom(H, G_1 \times G_2) = \hom(H, G_1) \cdot \hom(H, G_2). 
\end{equation}
To see that \eqref{eq:hom-multiplicative-for-tensor} holds, simply observe that for functions $\varphi_i : V(H) \rightarrow V(G_i)$ (where $1 \leq i \leq 2$), the function $v \mapsto (\varphi_1(v),\varphi_2(v))$ is a homomorphism from $H$ to $G_1 \times G_2$ if and only if $\varphi_i$ is a homomorphism from $H$ to $G_i$ for each $1 \leq i \leq 2$. 
In what follows, we will use the following (trivial) observation regarding tensor products and degeneracy.
\begin{observation}\label{obs:tensor_product_degrees}
	Let $F,G$ be graphs. If $G$ is $\kappa$-degenerate, then $F \times G$ is 
	$(v(F) \cdot \kappa)$-degenerate. 
\end{observation}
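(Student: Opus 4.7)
The plan is to construct an explicit degeneracy ordering of $F \times G$ by lifting a degeneracy ordering of $G$. Fix an ordering $v_1, \ldots, v_n$ of $V(G)$ witnessing $\kappa$-degeneracy, so each $v_i$ has at most $\kappa$ neighbours in $\{v_{i+1}, \ldots, v_n\}$. I would order $V(F \times G) = V(F) \times V(G)$ in a layered fashion: place the entire block $\{(u, v_i) : u \in V(F)\}$ before the block $\{(u, v_j) : u \in V(F)\}$ whenever $i < j$, and break ties inside each block arbitrarily (say, according to some fixed enumeration of $V(F)$).

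To verify that this ordering witnesses $(v(F) \cdot \kappa)$-degeneracy, I would fix an arbitrary vertex $(u, v_i)$ and bound the number of its neighbours that occur later in the ordering. By the definition of the tensor product, any neighbour of $(u, v_i)$ has the form $(u', v_j)$ with $\{u,u'\} \in E(F)$ and $\{v_i, v_j\} \in E(G)$. Since our graphs are simple, and in particular loopless, we must have $j \neq i$, and the ``later in the ordering'' condition then forces $j > i$. The degeneracy ordering of $G$ supplies at most $\kappa$ such indices $j$, and for each admissible $v_j$ there are at most $v(F)$ possible choices for $u'$. Multiplying these bounds gives the claimed figure $v(F) \cdot \kappa$.

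There is no real obstacle here, and the statement is sharp enough for the intended use in Lemma \ref{thm:hom_linear_combination} (where $F$ will be a fixed graph depending only on the $H_i$'s, so $v(F)$ is an absolute constant). It is worth noting that the argument is completely insensitive to the tiebreaking rule chosen inside each block, so the construction is canonical up to that choice.
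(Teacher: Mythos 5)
Your proof is correct, and it takes a slightly different route from the paper's. The paper argues via the ``every subgraph has a low-degree vertex'' characterization of degeneracy: it notes that $d_{F\times G}((x,y)) = d_F(x)\cdot d_G(y) < v(F)\cdot d_G(y)$ and then asserts that every subgraph of $F\times G$ has a vertex of degree at most $v(F)\cdot\kappa$ because $G$ is $\kappa$-degenerate. That ``it follows'' step is somewhat compressed — one really has to project a given subgraph $S \subseteq F\times G$ down to $Y := \pi_2(V(S)) \subseteq V(G)$, find $y^\ast\in Y$ with at most $\kappa$ $G$-neighbours in $Y$, and observe that any $(x,y^\ast)\in V(S)$ has at most $d_F(x)\cdot\kappa \le v(F)\cdot\kappa$ neighbours in $S$. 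You instead work directly with the ordering definition: you lift a degeneracy ordering of $G$ to a layered ordering of $V(F)\times V(G)$ and count later neighbours. The two approaches are equivalent (the ordering characterization and the subgraph-minimum-degree characterization of degeneracy are the same), but your version is more explicit about where the $\kappa$ factor comes from, and it avoids the unstated projection argument. One small point worth making explicit, which you handled correctly: the step ``$j \ne i$ forces $j > i$'' relies on $G$ being loopless, since otherwise a neighbour $(u', v_i)$ in the same layer but later in the tiebreak order would count against the budget; the same issue is hidden in the paper's degree formula, which also presumes simple graphs. Your bound of $v(F)$ for the number of choices of $u'$ could be tightened to $d_F(u)$, matching the paper's inequality $d_F(x) < v(F)$, but this looseness is harmless.
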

\begin{proof}
	It is easy to see that for each $x \in V(F)$ and $y \in V(G)$, the degree of $(x,y)$ in $F \times G$ is $d_{F \times G}((x,y)) = d_F(x) \cdot d_G(y) < v(F) \cdot d_G(y)$. It follows that every subgraph of $F \times G$ contains a vertex of degree at most $v(F) \cdot \kappa$ (since $G$ is $\kappa$-degenerate).
\end{proof}

We now state a lemma of Erd\H{o}s, Lov\'{a}sz and Spencer \cite{ELS} (see also Proposition 5.44(b) in \cite{Lovasz}), which will play a crucial role in the proof of Lemma \ref{thm:hom_linear_combination}. 
\begin{lemma}[\cite{Lovasz}]\label{lem:hom_matrix_invertible}
	Let $H_1, \dots, H_k$ be pairwise non-isomorphic graphs, and let 
	$c_1, \dots, c_k \neq 0$ be non-zero constants. Then there exist graphs $F_1, \dots, F_k$ such that the $k \times k$ matrix 
	$M_{i,j} = c_j \cdot \nolinebreak \hom(H_j, F_i)$, $1 \leq i,j \leq k$, is invertible. 
\end{lemma}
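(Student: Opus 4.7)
The plan is to prove the lemma in three stages. First, I would establish the linear independence of the functions $\inj(H_1,\cdot),\dots,\inj(H_k,\cdot)$ on finite graphs. Second, I would transfer this to the functions $\hom(H_1,\cdot),\dots,\hom(H_k,\cdot)$ by way of the identity $\hom(H,G) = \sum_{P \in \mathcal{P}(H)} \inj(H/P,G)$ supplied by Fact~\ref{fact:general_hom-inj}. Third, I would extract graphs $F_1,\dots,F_k$ making the unscaled matrix invertible, and finally observe that the column rescaling by the nonzero constants $c_j$ preserves invertibility.

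For the first stage, order the graphs lexicographically by decreasing $(|V(H_j)|,|E(H_j)|)$, and suppose $\sum_j \alpha_j \inj(H_j,G) = 0$ for every graph $G$. Pick an index $i^\star$ which is maximal in this order among indices $j$ with $\alpha_j \neq 0$, and evaluate at $G = H_{i^\star}$. For any other $j$ with $\alpha_j \neq 0$, there is no injective homomorphism from $H_j$ to $H_{i^\star}$: if $|V(H_j)| > |V(H_{i^\star})|$ this is immediate; if $|V(H_j)| = |V(H_{i^\star})|$ and $|E(H_j)| > |E(H_{i^\star})|$ then an injection on vertices would have to carry more edges than are present in the target; and if both parameters coincide, an injective homomorphism is necessarily a bijection on vertices carrying $E(H_j)$ into $E(H_{i^\star})$ with $|E(H_j)| = |E(H_{i^\star})|$, which would force an isomorphism and contradict $H_j \not\cong H_{i^\star}$. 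So the displayed equation collapses to $\alpha_{i^\star}\cdot\aut(H_{i^\star}) = 0$, a contradiction.

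For the second stage, suppose $\sum_j \beta_j \hom(H_j,G) = 0$ for every $G$. Expanding each summand via Fact~\ref{fact:general_hom-inj} and grouping by the isomorphism class of the quotient yields $\sum_{H'} \gamma_{H'}\inj(H',G) = 0$ for every $G$, where $H'$ ranges over the (finite) set of isomorphism classes of quotients of $H_1,\dots,H_k$ and $\gamma_{H'} = \sum_{j,\, P:\,H_j/P \cong H'} \beta_j$. By the linear independence established in the first stage (applied to this enlarged but still pairwise non-isomorphic family), every $\gamma_{H'}$ vanishes. Now fix $j$ with $|V(H_j)|$ maximal among $H_1,\dots,H_k$: a quotient $H_{j'}/P$ can be isomorphic to $H_j$ only if $|V(H_{j'})| \geq |V(H_j)|$, which by maximality forces $|V(H_{j'})| = |V(H_j)|$ and $P$ to be the trivial singleton partition, so $H_{j'} \cong H_j$ and hence $j' = j$. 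This gives $\beta_j = \gamma_{H_j} = 0$ for every such $j$; iterating the argument on the remaining (strictly smaller) graphs kills every $\beta_j$.

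For the third stage, linear independence of $\hom(H_1,\cdot),\dots,\hom(H_k,\cdot)$ means that the vectors $\mathbf{v}_G := \bigl(\hom(H_1,G),\dots,\hom(H_k,G)\bigr) \in \mathbb{Z}^k$, ranging over all graphs $G$, span $\mathbb{R}^k$ — otherwise some nonzero $(\beta_j)$ would annihilate every $\mathbf{v}_G$, contradicting what was just proved. Hence one can select $F_1,\dots,F_k$ so that $\mathbf{v}_{F_1},\dots,\mathbf{v}_{F_k}$ form a basis of $\mathbb{R}^k$, making the matrix $\bigl(\hom(H_j,F_i)\bigr)_{i,j}$ invertible; multiplying each column $j$ by the nonzero constant $c_j$ then gives the desired invertible matrix $M$. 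The main obstacle is really the careful bookkeeping in the first stage: the tiebreak between equal-size, equal-edge non-isomorphic graphs relies on the observation that an injective homomorphism between graphs with the same number of vertices and edges must be an isomorphism, which is the crucial step that makes the triangular-type argument go through.
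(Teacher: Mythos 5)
Your proposal is correct, but there is nothing in the paper to compare it against: the paper does not prove Lemma~\ref{lem:hom_matrix_invertible} at all, importing it from Erd\H{o}s--Lov\'{a}sz--Spencer \cite{ELS} (Proposition 5.44(b) in \cite{Lovasz}) and only using it as a black box in the appendix. What you have written is essentially the standard proof of that cited result, and it goes through: the triangular argument for linear independence of $\inj(H_j,\cdot)$ (with the key observation that an injective homomorphism between graphs with equal vertex and edge counts is an isomorphism), the transfer to $\hom$ via the quotient expansion of Fact~\ref{fact:general_hom-inj}, and the extraction of $F_1,\dots,F_k$ from the fact that the evaluation vectors $\bigl(\hom(H_1,G),\dots,\hom(H_k,G)\bigr)$ span $\mathbb{R}^k$, with the column rescaling by the nonzero $c_j$ being harmless. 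Two small points of presentation rather than substance: in the first stage, ``maximal in the decreasing lexicographic order'' should be stated unambiguously as choosing $i^\star$ with $(|V(H_{i^\star})|,|E(H_{i^\star})|)$ lexicographically \emph{smallest} among indices with $\alpha_j\neq 0$ --- your three-case analysis only covers the other indices under that reading, and with the opposite reading the argument would fail (a smaller graph can embed into a larger one); and in the second stage the quotients $H_j/P$ may have loops, which you should either discard (they contribute $\inj(H',G)=0$ on simple $G$, as the paper notes in a footnote) or explicitly allow in the first-stage family. Neither issue affects correctness; your write-up has the added value of making the paper's black-box ingredient self-contained.
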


\noindent
Finally, we are ready to prove Lemma \ref{thm:hom_linear_combination}. 

\begin{proof}[Proof of Lemma \ref{thm:hom_linear_combination}]
	By Lemma \ref{lem:hom_matrix_invertible}, there are graphs $F_1,\dots,F_k$  such that the $k \times k$ matrix \linebreak $M_{i,j} := c_j \cdot \hom(H_j,F_i)$ $(1 \leq i,j \leq k)$ is invertible. 
    Given an input graph $G$, we set $G_i := F_i \times G$ and $b_i := c_1 \cdot \hom(H_1, G_i) + \dots + c_k \cdot \hom(H_k, G_i)$ for each $1 \leq i \leq k$. Observe that 
	\begin{equation}\label{eq:tensor-product_reduction}
	\begin{split}
	b_i = \sum_{j = 1}^{k}{c_j \cdot \hom(H_j,F_i \times G)} = 
	\sum_{j = 1}^{k}{c_j \cdot \hom(H_j,F_i) \cdot \hom(H_j,G)} = 
	\sum_{j = 1}^{k}{M_{i,j} \cdot \hom(H_j,G)}.
	\end{split}
	\end{equation}
	In the third equality above, we used \eqref{eq:hom-multiplicative-for-tensor}. 
	We will treat \eqref{eq:tensor-product_reduction} ($1 \leq i \leq k$) as a system of linear equations, where $\hom(H_1,G),\dots,\hom(H_k,G)$ are the variables, $M$ is the matrix of the system, and $b_1,\dots,b_k$ are the constant terms. Since $M$ is invertible (as guaranteed by our choice of $F_1,\dots,F_k$), knowing $b_1,\dots,b_k$ indeed enables us to compute $\hom(H_1,G),\dots,\hom(H_k,G)$ in time $O(1)$, as required. 
	To complete the proof, we note that $|V(G_i)| = |V(G)| \cdot |V(F_i)| = O(|V(G)|)$ and $|E(G_i)| \leq |E(G)| \cdot |E(F_i)|^2 = O(|E(G)|)$ for every $1 \leq i \leq k$, and that if $G$ is $O(1)$-degenerate then so are $G_1,\dots,G_k$ by 
	Observation \ref{obs:tensor_product_degrees}.
\end{proof}
\end{document}